\def\bPhi{{\mathbf{\Phi}}}
\def\bOmega{{\mathbf{\Omega}}}
\def\bSig{{\mathbf{\Sigma}}}
\def\bPsi{{\mathbf{\Psi}}}
\def\bg{{\mathbf{g}}}
\def\bh{{\mathbf{h}}}
\def\bs{{\mathbf{s}}}
\def\bu{{\mathbf{u}}}
\def\bx{{\mathbf{x}}}
\def\by{{\mathbf{y}}}
\def\bz{{\mathbf{z}}}
\def\bA{{\mathbf{A}}}
\def\bB{{\mathbf{B}}}
\def\bC{{\mathbf{C}}}
\def\bD{{\mathbf{D}}}
\def\bG{{\mathbf{G}}}
\def\bH{{\mathbf{H}}}
\def\bI{{\mathbf{I}}}
\def\bJ{{\mathbf{J}}}
\def\bK{{\mathbf{K}}}
\def\bM{{\mathbf{M}}}
\def\bP{{\mathbf{P}}}
\def\bQ{{\mathbf{Q}}}
\def\bS{{\mathbf{S}}}
\def\bU{{\mathbf{U}}}
\def\bW{{\mathbf{W}}}
\def\bX{{\mathbf{X}}}
\def\bY{{\mathbf{Y}}}
\def\b0{{\boldsymbol{0}}}
\newcommand{\norm}[1]{{ \left\Vert #1 \right\Vert }}
\newlength{\figwidth}
\newlength{\figwidthb}
\theoremstyle{definition}
\newtheorem{thm}{Theorem}
\newtheorem{lem}{Lemma}
\newtheorem{rem}{Remark}
\newtheorem{cor}{Corollary}
\begin{document}
\DeclareGraphicsExtensions{.eps}

\title{Full Rank Solutions for the MIMO Gaussian Wiretap Channel with an Average Power Constraint\thanks{The authors are with the Dept. of Electrical Engineering and Computer Science, University of California, Irvine, CA 92697-2625, USA. e-mail:\{afakoori, swindle\}@uci.edu}\thanks{This work was supported by the U.S. Army Research Office under the Multi-University Research
Initiative (MURI) grant W911NF-07-1-0318, and by the National Science Foundation under grant
CCF-1117983.}}

\author{\normalsize S. Ali. A. Fakoorian, {\it Student Member, IEEE} and A. Lee Swindlehurst, {\it Fellow, IEEE}}

\maketitle

\begin{abstract}
This paper considers a multiple-input multiple-output (MIMO) Gaussian
wiretap channel model, where there exists a transmitter, a legitimate
receiver and an eavesdropper, each equipped with multiple antennas. 
In this paper, we first revisit the rank property of the optimal input
covariance matrix that achieves the secrecy capacity of the multiple
antenna MIMO Gaussian wiretap channel under the average power
constraint. Next, we obtain necessary and sufficient conditions on the
MIMO wiretap channel parameters such that the optimal input covariance
matrix is full-rank, and we fully characterize the resulting
covariance matrix as well.  Numerical results are presented to
illustrate the proposed theoretical findings.

\end{abstract}

\begin{keywords}
MIMO Wiretap Channel, Secrecy Capacity, Physical Layer Security
\end{keywords}
\begin{center} \bfseries EDICS: WIN-PHYL, WIN-INFO, MSP-CAPC, WIN-CONT\end{center}

\newpage

\section{Introduction}

The broadcast nature of a wireless medium makes it very
susceptible to eavesdropping, where the transmitted message is decoded by
unintended receiver(s). Recent information-theoretic research on
secure communication has focused on enhancing security at the physical
layer. The wiretap channel, first introduced and studied by Wyner \cite{Wyner},
is the most basic physical layer model that captures the problem of
communication security. Wyner showed that when an eavesdropper's
channel is a degraded version of the main channel, the source and
destination can achieve a positive secrecy rate, while ensuring that
the eavesdropper gets zero bits of information. The maximum secrecy
rate from the source to the destination is defined as the secrecy
capacity. The Gaussian wiretap channel, in which the outputs at the
legitimate receiver and at the eavesdropper are corrupted by additive
white Gaussian noise, was studied in \cite{Hellman}.

Determining the secrecy capacity of a Gaussian wiretap channel is in
general a difficult non-convex optimization problem, and has been
addressed independently in \cite{Hassibi}-\cite{Bustin}.  Oggier and
Hassibi \cite{Hassibi} and Khisti and Wornell \cite{KhistiMIMO}
followed an indirect approach using a Sato-like argument and matrix
analysis tools. They considered the problem of finding the secrecy
capacity of the Gaussian MIMO wiretap channel under the average total
power constraint, and a closed-form expression for the secrecy
capacity in the high signal-to-noise-ratio (SNR) regime was obtained
in \cite{KhistiMIMO}. In \cite{Petropulu}, the rank property of the
optimal input covariance matrix for the secrecy rate maximization
problem is discussed but the authors were unable to characterize the
solution for the general case. For some special cases of the MIMO
wiretap channel, where the solution has rank one, the optimal input
covariance matrix that achieves the secrecy capacity under the average
total power constraint was obtained in
\cite{Petropulu}-\cite{UlukusMAC}.

In \cite{Liu09}, Liu and Shamai propose a more information-theoretic approach
using the enhancement concept, originally presented by Weingarten et
al. \cite{Weingarten}, as a tool for the characterization of the MIMO Gaussian
broadcast channel capacity. Liu and Shamai have shown that an enhanced
degraded version of the channel attains the same secrecy
capacity as does a Gaussian input distribution. From the mathematical
solution in \cite{Liu09} it was evident that such an enhanced channel exists;
however it was not clear how to construct such a channel until the
work of \cite{Bustin}, which provided a closed-form expression for the secrecy
capacity under a \emph{covariance matrix} power constraint.  While
this result is interesting since the expression for the secrecy
capacity is valid for all SNR scenarios, there still exists no
computable secrecy capacity expression for the MIMO Gaussian wiretap
channel under an average total power constraint.

In this paper, we first investigate the rank property of the optimal
input covariance matrix that achieves the secrecy capacity of the
general Gaussian multiple-input multiple-output (MIMO) wiretap channel
under the average total power constraint, where the number of antennas
is arbitrary for both the transmitter and the two receivers.  Next, we
obtain the optimal input covariance matrix for the case that this
optimal covariance matrix is full-rank. Necessary and sufficient
conditions to have a full-rank optimal input covariance matrix are
characterized as well.

The rest of this paper is organized as follows. In the next section,
we describe the assumed mathematical model and revisit the current
solution for the wiretap channel under the matrix power constraint.
The rank property of the optimal input covariance matrix under the
average power constraint is investigated in Section III, and in
Section IV we characterize the conditions under which the input covariance
matrix that achieves the secrecy capacity of a wiretap channel under the
average power constraint is full-rank. In Section V, we discuss some
interesting facts regarding the optimal solution, and in Section~VI we
present numerical results to illustrate the proposed
solutions. Finally, Section~VII concludes the paper.

\textbf{Notation:} Vector-valued random variables are written with
non-boldface uppercase letters ({\em e.g.,} $X$), while the
corresponding non-boldface lowercase letter ($\bx$) denotes a specific
realization of the random variable. Scalar variables are written with
non-boldface (lowercase or uppercase) letters.  The Hermtian (i.e., 
conjugate) transpose is denoted by $(.)^H$, the matrix trace by Tr(.),
and \textbf{I} indicates an identity matrix.  Inequality $\bA \preceq \bB$ 
means that $\bA-\bB$ is Hermitian positive semi-definite. The Euclidean norm 
of the vector $\bx$ is written as $\norm{\bx}$. Mutual information between
the random variables $A$ and $B$ is denoted by $I(A;B)$, $E$
is the expectation operator, and
$\mathcal{CN}(0,\sigma^2)$ represents the complex circularly symmetric
Gaussian distribution with zero mean and variance $\sigma^2$.

\section{System Model and Prior Works}  \label{sec:ach1}
We begin with a multiple-antenna wiretap channel with $n_t$ transmit
antennas and $n_r$ and $n_e$ receive antennas at the legitimate
recipient and the eavesdropper, respectively:
\begin{align}\label{wirtp1}
\begin{split}
\by_r&=\bH \bx+ \bz_r\\
\by_e&=\bG\bx+ \bz_e
\end{split}
\end{align}
where $\bx$ is a zero-mean $n_t \times 1$ transmitted signal vector,
$\bz_r\in\mathbb{C}^{n_r\times1}$ and
$\bz_e\in\mathbb{C}^{n_e\times1}$ are additive white Gaussian noise
vectors at the receiver and eavesdropper, respectively, with
i.i.d. entries distributed as $\mathcal{CN}(0, 1)$.  The matrices
$\bH\in\mathbb{C}^{n_r\times{n_t}}$ and $\bG\in\mathbb{C}^{n_e\times
n_t}$ represent the channels associated with the receiver and the
eavesdropper, respectively.  Similar to other papers considering the
perfect secrecy rate of the wiretap channel, we assume that the
transmitter has perfect channel state information (CSI) for both the
legitimate receiver and the eavesdropper.  For the Gaussian channel,
where Gaussian inputs are an optimal choice, the secrecy capacity is
given by \cite{Hassibi}
\begin{align}\label{wirtp2}
\mathcal{C}_{sec}&=\max_{\bx} [I(X; Y_r)- I(X; Y_e)]= \max_{\bQ\succeq\b0} R(\bQ) 
\end{align}
where $R(\bQ)=\log|\bH\bQ\bH^H+\bI|-\log|\bG\bQ\bG^H+\bI|$, and
$\bQ=E\{\bx \bx^H\}$ is the input covariance matrix.

In \cite{Bustin}, the above secret communication problem was analyzed
under the matrix power-covariance constraint, defined as
\begin{align}\label{wirtp3}
\bQ\preceq \bS
\end{align}
where $\bS$ is a positive semi-definite matrix.  An explicit
expression for the secrecy capacity under~(\ref{wirtp3}) was 
obtained via applying the generalized eigenvalue decomposition to the following two positive definite matrices 
\begin{align}\label{wirtp4}
(\bS^{\frac{1}{2}}\bH^H\bH\bS^{\frac{1}{2}}+\textbf{I}\quad,\quad \bS^{\frac{1}{2}}\bG^H\bG\bS^{\frac{1}{2}}+\textbf{I})
\end{align}
In particular, there exists an invertible generalized eigenvector matrix $\bC$ such that \cite{Horn}
\begin{align}\label{wirtp5}
\bC^H\left[\bS^{\frac{1}{2}}\bG^H\bG\bS^{\frac{1}{2}}+\textbf{I}\right]\bC=\textbf{I}
\end{align}
\begin{align}\label{wirtp6}
\bC^H\left[\bS^{\frac{1}{2}}\bH^H\bH\bS^{\frac{1}{2}}+\textbf{I}\right]\bC=\mathbf{\Lambda}
\end{align}
where $\mathbf{\Lambda}=\text{diag}\{\lambda_1,...,\lambda_{n_t}\}$ is
a positive definite diagonal matrix and $\lambda_1,...,\lambda_{n_t}$
represent the generalized eigenvalues.  Without loss of generality,
we assume the eigenvalues are ordered as
$$\lambda_1\geq...\geq\lambda_b >1 \geq \lambda_{b+1}\geq...\geq\lambda_{n_t}>0$$
so that a total of $b$ $(0\leq b \leq n_t)$ are greater than 1.
Hence, we can write $\mathbf{\Lambda}$ as
\begin{align}\label{wirtp7}
\mathbf{\Lambda}=\left[
\begin{array}{ccc}
\mathbf{\Lambda}_1 & \b0\\
\b0 & \mathbf{\Lambda}_2
\end{array}
\right]
\end{align}
where $\mathbf{\Lambda}_1=\text{diag}\{\lambda_1,...,\lambda_b\}$ and
$\mathbf{\Lambda}_2=\text{diag}\{\lambda_{b+1},...,\lambda_{n_t}\}$. We can partition $\bC$ similarly:
\begin{align}\label{wirtp8}
\bC=\left[\bC_1\quad \bC_2\right]
\end{align}
where $\bC_1$ is the $n_t \times b$ submatrix representing the
generalized eigenvectors corresponding to
$\{\lambda_1,...,\lambda_b\}$ and $\bC_2$ is the $n_t \times (n_t-b)$
submatrix representing the generalized eigenvectors corresponding to
$\{\lambda_{b+1},...,\lambda_{n_t}\}$.  Using the above notation, the
secrecy capacity of the MIMO wiretap channel under the \emph{matrix}
power constraint (\ref{wirtp3}) can be expressed as
\cite{Bustin}, \cite[Theorem 3]{BCMIMO}:

\begin{cor}\label{cor1}
Under the matrix power constraint (\ref{wirtp3}), the secrecy capacity of the MIMO Gaussian wiretap channel is given by
\begin{align}\label{wirtp9}
\mathcal{C}_{sec}(\bS)=\sum_{i=1}^{b}\log\lambda_i=\log|\mathbf{\Lambda}_1|
\end{align}
where the optimal input covariance matrix $\bQ_S^*$ that 
maximizes~(\ref{wirtp2}) and attains~(\ref{wirtp9}) is given by 
\begin{align}\label{wirtp10}
\bQ_S^*= \bS^{\frac{1}{2}}\bC \left[
\begin{array}{ccc}
(\bC_1^H\bC_1)^{-1} & \b0\\
\b0 & \b0
\end{array}
\right]\bC^H \bS^{\frac{1}{2}} \; .
\end{align}
\end{cor}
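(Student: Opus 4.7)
My plan is a direct verification of the proposed $\bQ_S^*$, split into feasibility, achievability of the claimed rate, and a converse which I would inherit from \cite{Bustin},\cite{BCMIMO}. Without loss of generality I assume $\bS\succ\b0$ (the singular case follows by a continuity argument applied to $\bS+\epsilon\bI$ as $\epsilon\downarrow 0$). Introduce the change of variable $\tilde\bQ=\bS^{-\frac{1}{2}}\bQ\bS^{-\frac{1}{2}}$, so that the matrix power constraint (\ref{wirtp3}) becomes $\b0\preceq\tilde\bQ\preceq\bI$ and the candidate (\ref{wirtp10}) becomes $\tilde\bQ^{*}=\bC\bD\bC^{H}$ with $\bD=\text{diag}\bigl((\bC_{1}^{H}\bC_{1})^{-1},\b0\bigr)$. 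A block multiplication gives $\tilde\bQ^{*}=\bC_{1}(\bC_{1}^{H}\bC_{1})^{-1}\bC_{1}^{H}$, which is the orthogonal projector onto the column span of $\bC_{1}$; thus $\b0\preceq\tilde\bQ^{*}\preceq\bI$, equivalently $\b0\preceq\bQ_{S}^{*}\preceq\bS$, so $\bQ_{S}^{*}$ is feasible.

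For achievability, I would apply Sylvester's identity $|\bI+\bX\bY|=|\bI+\bY\bX|$ together with the GEVD relations (\ref{wirtp5})-(\ref{wirtp6}) to obtain
\begin{align*}
\log|\bI+\bH\bQ_{S}^{*}\bH^{H}| &= \log|\bI+\bD(\bLam-\bC^{H}\bC)|,\\
\log|\bI+\bG\bQ_{S}^{*}\bG^{H}| &= \log|\bI+\bD(\bI-\bC^{H}\bC)|.
\end{align*}
Since the lower block of $\bD$ vanishes, each inner matrix is block upper triangular with the identity in its $(2,2)$ block, so its determinant depends only on the $(1,1)$ block. A direct calculation then yields $|\bI+\bD(\bLam-\bC^{H}\bC)|=|\bLam_{1}|/|\bC_{1}^{H}\bC_{1}|$ and $|\bI+\bD(\bI-\bC^{H}\bC)|=1/|\bC_{1}^{H}\bC_{1}|$, whose ratio is $|\bLam_{1}|$. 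Therefore $R(\bQ_{S}^{*})=\log|\bLam_{1}|=\sum_{i=1}^{b}\log\lambda_{i}$, matching (\ref{wirtp9}).

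The main obstacle is the converse $R(\bQ)\leq\log|\bLam_{1}|$ for every $\b0\preceq\bQ\preceq\bS$: since $R(\bQ)$ is neither concave nor monotone in $\bQ$, no routine convex-optimization argument applies here. I would therefore appeal directly to \cite[Theorem 3]{BCMIMO} and \cite{Bustin}, where this upper bound is established through a Liu--Shamai channel-enhancement argument — one exhibits an enhanced \emph{degraded} wiretap channel whose secrecy capacity upper-bounds $\max_{\bQ\preceq\bS}R(\bQ)$, and then verifies that this enhanced capacity equals the value $\log|\bLam_{1}|$ already achieved by $\bQ_{S}^{*}$ in the achievability step.
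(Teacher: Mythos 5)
Your proposal is correct, but note that the paper offers no proof of Corollary~\ref{cor1} at all: the statement is imported verbatim from \cite{Bustin} and \cite[Theorem 3]{BCMIMO}, so there is no in-paper argument to compare against. What you add beyond the paper is a genuine, self-contained verification that the stated $\bQ_S^*$ is feasible and attains $\log|\bLam_1|$; I checked your determinant manipulations and they go through --- writing $\bS^{\frac{1}{2}}\bH^H\bH\bS^{\frac{1}{2}}=\bC^{-H}\bLam\bC^{-1}-\bI$ from (\ref{wirtp6}) and applying Sylvester's identity twice does reduce $|\bI+\bH\bQ_S^*\bH^H|$ to $|\bLam_1|/|\bC_1^H\bC_1|$, and the analogous step with (\ref{wirtp5}) gives $1/|\bC_1^H\bC_1|$, so the ratio is $|\bLam_1|$ as claimed. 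Two small refinements: (i) feasibility does not require $\bS\succ\b0$ or a continuity argument, since $\bS-\bQ_S^*=\bS^{\frac{1}{2}}(\bI-\bP_{\bC_1})\bS^{\frac{1}{2}}\succeq\b0$ directly, with $\bP_{\bC_1}=\bC_1(\bC_1^H\bC_1)^{-1}\bC_1^H$ the projector; (ii) your symbol $\bD$ collides with the eigenvalue matrix $\bD$ of Lemma~\ref{lem4}, so rename it if this text is to sit inside the paper. The converse is where all the real difficulty lives --- $R(\bQ)$ is indeed neither concave nor monotone in $\bQ$ over $\b0\preceq\bQ\preceq\bS$ --- and you correctly identify that it rests on the Liu--Shamai enhancement argument; since you defer to exactly the references the paper itself cites, your proof and the paper's ``proof'' ultimately stand on the same external foundation, with yours being the more explicit about what is verified locally and what is borrowed.
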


\begin{rem}\label{rem1}
From (\ref{wirtp5}) and (\ref{wirtp6}), one can easily confirm that if
$\bH^H\bH \preceq \bG^H\bG$, then for any $\bS \succeq \b0$ we have
$\mathbf{\Lambda} \preceq \bI$. In other words, in this case the pencil in
(\ref{wirtp4}) has no generalized eigenvalue bigger than 1. Thus,
$\mathcal{C}_{sec}(\bS)=0$ for any $\bS \succeq \b0$.
\end{rem} 

In this paper, we consider the secrecy capacity problem in
(\ref{wirtp2}) under the \emph{average} power constraint:
\begin{align}\label{wirtp11}
\mathrm{Tr}(E\{\bx \bx^H\})= \mathrm{Tr}(\bQ)\leq P_t.
\end{align}
For this constraint, no computable secrecy capacity expression has
been derived to date for the general MIMO case.  In principle, one
would have to find the secrecy capacity through an exhaustive search
over the set $\{\bS: \bS\succeq \b0, \text{Tr}(\bS)\leq P\}$
\cite[Lemma 1]{Weingarten}, \cite{BCMIMO}:
\begin{align}\label{wirtp12}
\mathcal{C}_{sec}(P_t)=\max_{\bS\succeq \b0, \text{Tr}(\bS) = P_t}\mathcal{C}_{sec}(\bS) \; .
\end{align}
where for any given semidefinite $\bS$, $\mathcal{C}_{sec}(\bS)$
should be computed as given by~(\ref{wirtp9}).

In the next section, we investigate the rank of the optimal input
covariance matrix $\bQ^*$ that attains
$\mathcal{C}_{sec}(P_t)$. Next, in Section IV, we obtain 
the optimal $\bQ^*$ under the average power
constraint for the case that $\bQ^*$ is full-rank.

\section{Rank Property of the Optimal Solution under an Average Power Constraint}  \label{sec:ach1}

First, we note that the problem under $\mathrm{Tr}(\bQ)\leq P_t$ is
equivalent to that under $\mathrm{Tr}(\bQ)=P_t$
\cite{Hassibi,Petropulu}\footnote{For this statement, and also for the
following results in the paper, we exclude the special case $\bH^H\bH
\preceq \bG^H\bG$ for which the $C_{sec}$ is trivially $0$ for any
$\bS \succeq \b0$, and consequently for any $P_t$, as pointed out in
Remark \ref{rem1}.}. Also note that in (\ref{wirtp12}), this implies
that we have $\mathrm{Tr}(\bS)=P_t$ instead of $\mathrm{Tr}(\bS)\le
P_t$.

We are interested in finding the optimal $\widehat{\bS}$ which
maximizes the problem (\ref{wirtp12}). Let us assume that we have
found the optimal $\widehat{\bS}$. Consequently, from (\ref{wirtp10}),
the optimal input covariance matrix that attains
$\mathcal{C}_{sec}(P_t)$ is given by
\begin{align}\label{wirtp13}
\bQ^*= \widehat{\bS}^{\frac{1}{2}}\widehat{\bC} \left[
\begin{array}{ccc}
(\widehat{\bC}_1^H\widehat{\bC}_1)^{-1} & \b0\\
\b0 & \b0
\end{array}
\right]\widehat{\bC}^H \widehat{\bS}^{\frac{1}{2}} \; ,
\end{align}
where $\widehat{\bC}$ and $\widehat{\bC}_1$ have respectively the same
definitions as those of $\bC$ and $\bC_1$, given by
(\ref{wirtp5})-(\ref{wirtp8}), but here for the pencil
$(\widehat{\bS}^{\frac{1}{2}}\bH^H\bH\widehat{\bS}^{\frac{1}{2}}+\bI\;,\;\widehat{\bS}^{\frac{1}{2}}\bG^H\bG\widehat{\bS}^{\frac{1}{2}}+\bI)$. Note
that $\bQ^*$ can be rewritten as
\begin{align}\label{wirtp14}
\bQ^*&= \widehat{\bS}^{\frac{1}{2}}\left[\widehat{\bC}_1\quad\widehat{\bC}_2\right] 
\left[
\begin{array}{ccc}
(\widehat{\bC}_1^H\widehat{\bC}_1)^{-1} & \b0\\
\b0 & \b0
\end{array}
\right]
\left[
\begin{array}{ccc}
\widehat{\bC}^H_1\\
\widehat{\bC}^H_2
\end{array}
\right]
\widehat{\bS}^{\frac{1}{2}} \nonumber\\
&=\widehat{\bS}^{\frac{1}{2}}\,\widehat{\bC}_1 (\widehat{\bC}_1^H\widehat{\bC}_1)^{-1} \widehat{\bC}^H_1\, \widehat{\bS}^{\frac{1}{2}} \nonumber\\
&=\widehat{\bS}^{\frac{1}{2}}\,\bP_{\widehat{\bC}_1}\,\widehat{\bS}^{\frac{1}{2}}
\end{align}
where $\bP_{\widehat{\bC}_1}=\widehat{\bC}_1
(\widehat{\bC}_1^H\widehat{\bC}_1)^{-1} \widehat{\bC}^H_1$ is the
projection matrix onto the space of $\widehat{\bC}_1$. Moreover, let
$\bP^\perp_{\widehat{\bC}_1}=\bI-\bP_{\widehat{\bC}_1}$ be the
projection onto the space orthogonal to $\widehat{\bC}_1$. We have
\begin{align}
\mathrm{Tr}(\bQ^*) &= \mathrm{Tr}(\widehat{\bS}^{\frac{1}{2}}\,\bP_{\widehat{\bC}_1}\,\widehat{\bS}^{\frac{1}{2}}) \nonumber\\
&= \mathrm{Tr}(\widehat{\bS}\,\bP_{\widehat{\bC}_1})   \label{wirtp15}\\
&= \mathrm{Tr}(\widehat{\bS}\,\bP_{\widehat{\bC}_1} \bP_{\widehat{\bC}_1})   \label{wirtp16}\\
&= \mathrm{Tr}(\bP_{\widehat{\bC}_1}\widehat{\bS}\,\bP_{\widehat{\bC}_1} )   \label{wirtp17}
\end{align}
where (\ref{wirtp15}) comes from the fact that
$\mathrm{Tr}(\bA\bB)=\mathrm{Tr}(\bB\bA)$, and (\ref{wirtp16}) is
because $\bP_{\widehat{\bC}_1}=\bP_{\widehat{\bC}_1}
\bP_{\widehat{\bC}_1}$. Similarly we have
\begin{align}
\mathrm{Tr}(\widehat{\bS}) &= \mathrm{Tr}\left((\bP_{\widehat{\bC}_1}+\bP^\perp_{\widehat{\bC}_1})\,\widehat{\bS}\,(\bP_{\widehat{\bC}_1}+\bP^\perp_{\widehat{\bC}_1})\right) \nonumber\\
&= \mathrm{Tr}(\bP_{\widehat{\bC}_1}\,\widehat{\bS}\,\bP_{\widehat{\bC}_1}) +\mathrm{Tr}(\bP^\perp_{\widehat{\bC}_1}\,\widehat{\bS}\,\bP^\perp_{\widehat{\bC}_1})   \label{wirtp18}\\
&= \mathrm{Tr}(\bQ_{\widehat{S}}^*) + \mathrm{Tr}(\bP^\perp_{\widehat{\bC}_1}\,\widehat{\bS}\,\bP^\perp_{\widehat{\bC}_1})  \label{wirtp19}
\end{align}
where (\ref{wirtp19}) results from (\ref{wirtp17}).

\begin{lem} \label{lem1}
For the optimal $\widehat{\bS}$, we have $\mathrm{span}\{\widehat{\bC}_1\}= \mathrm{span}\{\widehat{\bS}\}.$
\end{lem}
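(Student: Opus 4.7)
The plan is to establish the claimed equality by proving the two inclusions $\mathrm{span}\{\widehat{\bC}_1\} \subseteq \mathrm{span}\{\widehat{\bS}\}$ and $\mathrm{span}\{\widehat{\bS}\} \subseteq \mathrm{span}\{\widehat{\bC}_1\}$ separately, since they rest on quite different ingredients. The first is a purely algebraic property that would hold even before imposing optimality, whereas the second is forced by the optimality of $\widehat{\bS}$.

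I would dispatch $\mathrm{span}\{\widehat{\bC}_1\} \subseteq \mathrm{span}\{\widehat{\bS}\}$ first. Rearranging the generalized eigenvector identity defining each column $\bc_i$ of $\widehat{\bC}_1$ yields
\begin{equation*}
\bc_i = \frac{1}{\lambda_i - 1}\,\widehat{\bS}^{\frac{1}{2}}\bigl(\bH^H\bH - \lambda_i \bG^H\bG\bigr)\widehat{\bS}^{\frac{1}{2}}\bc_i,
\end{equation*}
which is well-defined since $\lambda_i > 1$ by construction. This exhibits $\bc_i$ as an element of $\mathrm{range}(\widehat{\bS}^{\frac{1}{2}}) = \mathrm{span}\{\widehat{\bS}\}$, completing the inclusion.

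For the reverse inclusion, I would use (\ref{wirtp19}) rewritten as
\begin{equation*}
\mathrm{Tr}\bigl(\bP^\perp_{\widehat{\bC}_1}\widehat{\bS}\,\bP^\perp_{\widehat{\bC}_1}\bigr) = P_t - \mathrm{Tr}(\bQ^*).
\end{equation*}
Since $\bP^\perp_{\widehat{\bC}_1}\widehat{\bS}\,\bP^\perp_{\widehat{\bC}_1} \succeq \b0$, its trace vanishes iff the matrix itself is zero, equivalently $\widehat{\bS}^{\frac{1}{2}}\bP^\perp_{\widehat{\bC}_1} = \b0$, equivalently $\mathrm{span}\{\widehat{\bS}\} \subseteq \mathrm{span}\{\widehat{\bC}_1\}$. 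The lemma therefore reduces to showing that the optimal $\widehat{\bS}$ must satisfy $\mathrm{Tr}(\bQ^*) = P_t$, i.e., no power is wasted on directions outside $\mathrm{span}\{\widehat{\bC}_1\}$.

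I would prove $\mathrm{Tr}(\bQ^*) = P_t$ by contradiction. Assume $\mathrm{Tr}(\bQ^*) < P_t$; then $\bQ^*$ already attains the putative maximum $\mathcal{C}_{sec}(P_t) = \mathcal{C}_{sec}(\widehat{\bS})$ while leaving strict slack in the budget, and I would build a feasible $\bS'$ with $\mathrm{Tr}(\bS') = P_t$ and $\mathcal{C}_{sec}(\bS') > \mathcal{C}_{sec}(\widehat{\bS})$, contradicting the optimality of $\widehat{\bS}$. The natural candidate is $\bS' = \alpha \bQ^*$ with $\alpha = P_t/\mathrm{Tr}(\bQ^*) > 1$: by Corollary \ref{cor1} applied to $\bS'$, the secrecy rate is the sum of $\log \lambda_i(\alpha)$ over those generalized eigenvalues of the pencil $(\alpha (\bQ^*)^{\frac{1}{2}}\bH^H\bH(\bQ^*)^{\frac{1}{2}}+\bI,\,\alpha (\bQ^*)^{\frac{1}{2}}\bG^H\bG(\bQ^*)^{\frac{1}{2}}+\bI)$ that exceed $1$. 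The hard part will be justifying that these eigenvalues strictly increase in $\alpha$, since they are defined implicitly by a determinantal equation and branches can in principle move non-monotonically. I would handle this by restricting attention to the $b$-dimensional invariant subspace on which the relevant $\lambda_i$'s remain bounded away from $1$ (the degenerate case $\bH^H\bH \preceq \bG^H\bG$ being excluded by Remark \ref{rem1}), differentiating the determinantal equation in $\alpha$, and reading off the sign of $d\lambda_i/d\alpha$ on the relevant branch.
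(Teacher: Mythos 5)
Your proof is correct, and its load-bearing step coincides with the paper's entire argument: the authors assert $\mathrm{Tr}(\widehat{\bS})=\mathrm{Tr}(\bQ^*)=P_t$ (the equivalence of the trace-inequality and trace-equality problems is simply cited from \cite{Hassibi,Petropulu} at the start of Section III) and then read off $\mathrm{Tr}(\bP^\perp_{\widehat{\bC}_1}\widehat{\bS}\,\bP^\perp_{\widehat{\bC}_1})=0$ from (\ref{wirtp19}), hence $\bP^\perp_{\widehat{\bC}_1}\widehat{\bS}=\b0$. Where you genuinely add value is in making explicit two things the paper glosses over. First, $\bP^\perp_{\widehat{\bC}_1}\widehat{\bS}=\b0$ only yields $\mathrm{span}\{\widehat{\bS}\}\subseteq\mathrm{span}\{\widehat{\bC}_1\}$; the paper declares the proof complete there, whereas you supply the converse inclusion from the eigenvector identity $(\lambda_i-1)\bc_i=\widehat{\bS}^{\frac{1}{2}}(\bH^H\bH-\lambda_i\bG^H\bG)\widehat{\bS}^{\frac{1}{2}}\bc_i$ with $\lambda_i>1$ --- a direction the paper only implicitly touches in the proof of Lemma \ref{lem2}, where null-space vectors of $\widehat{\bS}$ are identified with unit eigenvalues. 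Second, you prove the full-power property rather than citing it; your flagged ``hard part'' does close cleanly, since for the pencil $(\alpha\bA+\bI,\alpha\bB+\bI)$ with $\bA=\bQ^{*\frac{1}{2}}\bH^H\bH\bQ^{*\frac{1}{2}}$ and $\bB=\bQ^{*\frac{1}{2}}\bG^H\bG\bQ^{*\frac{1}{2}}$, the Rayleigh quotient $(1+\alpha x^H\bA x)/(1+\alpha x^H\bB x)$ is increasing in $\alpha$ precisely when it exceeds $1$, which via Courant--Fischer (or the simple-eigenvalue perturbation formula $d\lambda_i/d\alpha=(\lambda_i-1)/\alpha$) shows every branch above $1$ strictly increases while their number stays fixed, and at least one such branch exists once the degenerate case of Remark \ref{rem1} is excluded. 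The net effect is a more self-contained proof of the same lemma, at the cost of an auxiliary monotonicity argument the paper avoids by citation.
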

\begin{proof}: The proof is obtained using (\ref{wirtp19}), and by noting that for the optimal $\widehat{\bS}$ we must have $\mathrm{Tr}(\widehat{\bS})=
\mathrm{Tr}(\bQ^*)=P_t$. This means that we must have $\mathrm{Tr}(\bP^\perp_{\widehat{\bC}_1}\,\widehat{\bS}\,\bP^\perp_{\widehat{\bC}_1})=0$, or equivalently $\bP^\perp_{\widehat{\bC}_1}\,\widehat{\bS} = \b0$, which completes the proof. 
\end{proof}

Using Lemma \ref{lem1} in (\ref{wirtp14}) we have
\begin{align} \label{wirtp20}
\bQ^*=\widehat{\bS}.
\end{align}
The following lemma reveals another property of the optimal input covariance matrix under the average power constraint. 

\begin{lem}\label{lem2}
For the optimal $\widehat{\bS}$, i.e. $\bQ^*$, the pencil $(\widehat{\bS}^{\frac{1}{2}}\bH^H\bH\widehat{\bS}^{\frac{1}{2}}+\bI\;,\;\widehat{\bS}^{\frac{1}{2}}\bG^H\bG\widehat{\bS}^{\frac{1}{2}}+\bI)$ has no generalized eigenvalue less than one:
\begin{align}
\widehat{\bC}^H\left[\widehat{\bS}^{\frac{1}{2}}\bH^H\bH\widehat{\bS}^{\frac{1}{2}}+\bI\right]\widehat{\bC}&=\left[
\begin{array}{ccc}
\widehat{\mathbf{\Lambda}}_1 & \b0\\
\b0 & \bI
\end{array}
\right] \label{wirtp21} \\\label{wirtp22}
\widehat{\bC}^H\left[\widehat{\bS}^{\frac{1}{2}}\bG^H\bG\widehat{\bS}^{\frac{1}{2}}+\bI\right]\widehat{\bC}&=\left[
\begin{array}{ccc}
\bI & \b0\\
\b0 & \bI
\end{array}
\right]
\end{align}
where $\widehat{\mathbf{\Lambda}}_2=\bI$ corresponds to the
generalized eigenvalues equal to one.
\end{lem}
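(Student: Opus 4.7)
The plan is to derive the lemma as a direct structural consequence of Lemma~\ref{lem1}, via a dimension-counting argument on the null space of $\widehat{\bS}^{1/2}$. The key observation is that once we know $\widehat{\bS}$ has rank exactly $b$, the bottom-right identity blocks in (\ref{wirtp21})-(\ref{wirtp22}) are forced by the geometry of the problem rather than any additional optimization argument.

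First, I would combine Lemma~\ref{lem1} with the fact that $\widehat{\bC}_1$ has $b$ linearly independent columns (being a submatrix of the invertible generalized eigenvector matrix $\widehat{\bC}$) to conclude that $\mathrm{rank}(\widehat{\bS}) = b$. Since $\widehat{\bS}$ is Hermitian positive semi-definite, this gives $\mathrm{rank}(\widehat{\bS}^{1/2}) = b$, so that the null space $\mathcal{N}(\widehat{\bS}^{1/2})$ has dimension $n_t - b$.

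Next, I would observe that for any $\bv \in \mathcal{N}(\widehat{\bS}^{1/2})$, we have
\begin{align*}
\left[\widehat{\bS}^{\frac{1}{2}}\bH^H\bH\widehat{\bS}^{\frac{1}{2}}+\bI\right]\bv = \bv = 1\cdot \left[\widehat{\bS}^{\frac{1}{2}}\bG^H\bG\widehat{\bS}^{\frac{1}{2}}+\bI\right]\bv,
\end{align*}
so every nonzero vector in $\mathcal{N}(\widehat{\bS}^{1/2})$ is a generalized eigenvector of the pencil with eigenvalue equal to one. Hence the eigenspace associated with the eigenvalue $1$ contains $\mathcal{N}(\widehat{\bS}^{1/2})$ and therefore has dimension at least $n_t - b$.

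Finally, I would close the argument by a counting argument. Since both pencil matrices are positive definite, the generalized eigenvalue problem has exactly $n_t$ real positive eigenvalues (counted with multiplicity), and by the definition of $b$ precisely $b$ of these are strictly greater than $1$. Thus only $n_t - b$ eigenvalues can be less than or equal to $1$, and we have just shown at least $n_t - b$ of them are equal to $1$; so all of them equal $1$, none are strictly less than $1$, and one may take $\widehat{\bC}_2$ to be a basis of $\mathcal{N}(\widehat{\bS}^{1/2})$ to realize the block form in (\ref{wirtp21})-(\ref{wirtp22}) with $\widehat{\mathbf{\Lambda}}_2 = \bI$. The only mild subtlety, and the one place to be careful, is ensuring the dimension count is tight: the $b$ columns of $\widehat{\bC}_1$ lie outside $\mathcal{N}(\widehat{\bS}^{1/2})$ (otherwise they too would contribute eigenvalue $1$, contradicting $\lambda_i > 1$ for $i \le b$), so $\widehat{\bC}_1$ and a basis of $\mathcal{N}(\widehat{\bS}^{1/2})$ together span $\mathbb{C}^{n_t}$, confirming the decomposition.
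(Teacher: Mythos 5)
Your proposal is correct and follows essentially the same route as the paper: both identify the null space of $\widehat{\bS}$ (equivalently of $\widehat{\bS}^{1/2}$) as an eigenspace of the pencil for the generalized eigenvalue $1$, and both invoke Lemma~\ref{lem1} to conclude that this null space together with $\mathrm{span}\{\widehat{\bC}_1\}$ exhausts $\mathbb{C}^{n_t}$, leaving no room for eigenvalues below $1$. Your version merely makes the dimension count ($b$ eigenvalues $>1$ plus an eigenvalue-$1$ eigenspace of dimension at least $n_t-b$) explicit where the paper states it more tersely.
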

\begin{proof}
We note that any vector which lies in the null space of
$\widehat{\bS}$ can be a generalized eigenvector of the pencil
$(\widehat{\bS}^{\frac{1}{2}}\bH^H\bH\widehat{\bS}^{\frac{1}{2}}+\bI\;,\;\widehat{\bS}^{\frac{1}{2}}\bG^H\bG\widehat{\bS}^{\frac{1}{2}}+\bI)$,
with a generalized eigenvalue equal to 1. Such vectors span the 
null space of $\widehat{\bS}$, i.e.,
$\mathrm{span}\{\widehat{\bC}_2\}=\mathrm{span}\{\widehat{\bS}\}^\perp$. On
the other hand, from Lemma \ref{lem1},
$\mathrm{span}\{\widehat{\bC}_1\}=
\mathrm{span}\{\widehat{\bS}\}$. Thus for the optimal $\widehat{\bS}$,
all generalized eigenvectors of the pencil
$(\widehat{\bS}^{\frac{1}{2}}\bH^H\bH\widehat{\bS}^{\frac{1}{2}}+\bI\;,\;\widehat{\bS}^{\frac{1}{2}}\bG^H\bG\widehat{\bS}^{\frac{1}{2}}+\bI)$
correspond to generalized eigenvalues either bigger than 
or equal to 1.
\end{proof}

Let $b$ denote number of generalized eigenvalues of the pencil
$(\widehat{\bS}^{\frac{1}{2}}\bH^H\bH\widehat{\bS}^{\frac{1}{2}}+\bI\;,\;\widehat{\bS}^{\frac{1}{2}}\bG^H\bG\widehat{\bS}^{\frac{1}{2}}+\bI)$
that are strictly bigger than 1, where again $\widehat{\bS}=\bQ^*$
represents the optimal input covariance matrix that attains the
secrecy capacity under the average power constraint given by
(\ref{wirtp11}). From Lemma \ref{lem1}, we have
$\mathrm{rank}(\bQ^*)=\mathrm{rank}(\widehat{\bC}_1)=b$.

\begin{thm}\label{thm1}
For the optimal $\bQ^*$ we have
\begin{align} \label{wirtp23}
\mathrm{rank}(\bQ^*)\le m
\end{align}
where $m$ is the number of positive eigenvalues of the matrix 
$\bH^H\bH-\bG^H\bG$.
\end{thm}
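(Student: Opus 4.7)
The plan is to identify $\mathrm{rank}(\bQ^*)$ with a Hermitian inertia index and then bound that index by successive congruences. Lemma~\ref{lem1} and equation~(\ref{wirtp20}) have already given us $\mathrm{rank}(\bQ^*)=\mathrm{rank}(\widehat{\bC}_1)=b$, so the task reduces to showing $b\le m$.

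The first step is to translate ``number of generalized eigenvalues strictly greater than $1$'' into an inertia statement about an explicit Hermitian matrix. Set $\bA = \widehat{\bS}^{1/2}\bH^H\bH\widehat{\bS}^{1/2}+\bI$ and $\bB=\widehat{\bS}^{1/2}\bG^H\bG\widehat{\bS}^{1/2}+\bI$, both of which are Hermitian and strictly positive definite. Because $\bB\succ\b0$, the generalized eigenvalues of the pencil $(\bA,\bB)$ coincide with the ordinary eigenvalues of $\bB^{-1/2}\bA\bB^{-1/2}$, so $b$ equals the number of positive eigenvalues of $\bB^{-1/2}\bA\bB^{-1/2}-\bI=\bB^{-1/2}(\bA-\bB)\bB^{-1/2}$. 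Sylvester's law of inertia, applied to the invertible congruence by $\bB^{-1/2}$, then shows that $b$ equals the positive inertia index of
\[
\bA-\bB \;=\; \widehat{\bS}^{1/2}\bigl(\bH^H\bH-\bG^H\bG\bigr)\widehat{\bS}^{1/2}.
\]

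The second step is to compare this inertia with that of $\bM := \bH^H\bH-\bG^H\bG$, which by hypothesis has exactly $m$ positive eigenvalues. I would argue variationally: if $W$ is any subspace on which the Hermitian form $\bv\mapsto \bv^H\widehat{\bS}^{1/2}\bM\widehat{\bS}^{1/2}\bv$ is positive definite, then necessarily $W\cap\ker(\widehat{\bS}^{1/2})=\{\b0\}$, so $\widehat{\bS}^{1/2}(W)$ is a subspace of the same dimension as $W$, and the form $\bu\mapsto \bu^H\bM\bu$ is positive definite on $\widehat{\bS}^{1/2}(W)$. By the variational characterization of the positive inertia, $\dim W\le m$; maximizing over $W$ gives that $\widehat{\bS}^{1/2}\bM\widehat{\bS}^{1/2}$ has at most $m$ positive eigenvalues, hence $b\le m$.

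The main obstacle is concentrated in this second step: because $\widehat{\bS}$ can be singular, Sylvester's law cannot be invoked directly as an equality, and a naive ``conjugation'' argument would fail. The quadratic-form restriction sketched above handles this cleanly and is the only nontrivial ingredient; the remainder of the proof is a chain of standard identifications between pencils, congruences, and inertia.
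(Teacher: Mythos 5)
Your proof is correct, and it revolves around the same object as the paper's --- the congruence $\widehat{\bS}^{\frac{1}{2}}(\bH^H\bH-\bG^H\bG)\widehat{\bS}^{\frac{1}{2}}$ and a count of its positive eigenvalues --- but it reaches the bound by a somewhat different and more self-contained route. The paper invokes Lemma~\ref{lem2} (at the optimum the pencil has no generalized eigenvalue below one) to subtract~(\ref{wirtp22}) from~(\ref{wirtp21}) and obtain the explicit identity~(\ref{wirtp24}), which exhibits $\widehat{\bS}^{\frac{1}{2}}(\bH^H\bH-\bG^H\bG)\widehat{\bS}^{\frac{1}{2}}$ as positive semidefinite of rank $b$, and then asserts the conclusion in one line. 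You bypass Lemma~\ref{lem2} entirely: your identification of $b$ with the positive inertia of $\bA-\bB$ via $\bB^{-1/2}\bA\bB^{-1/2}$ and Sylvester's law is valid whether or not eigenvalues below one are present, so you only need Lemma~\ref{lem1} to get $\mathrm{rank}(\bQ^*)=b$. More importantly, you make explicit the step the paper compresses into ``from which it follows'': since $\widehat{\bS}^{\frac{1}{2}}$ may be singular, Sylvester's law cannot be cited as an equality, and your restriction-of-the-quadratic-form argument (a positive-definite subspace for the congruenced form maps injectively to a positive-definite subspace for $\bH^H\bH-\bG^H\bG$) is exactly the right way to show that a possibly singular congruence cannot increase the positive inertia index. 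The trade-off is that the paper's route through Lemma~\ref{lem2} also yields the structural fact that the congruenced difference is positive semidefinite at the optimum, which is reused elsewhere (e.g., in Lemma~\ref{lem3}) but is not needed for, and not produced by, your argument.
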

\begin{proof}
Subtracting (\ref{wirtp22}) from (\ref{wirtp21}), a 
straightforward computation yields
\begin{align} \label{wirtp24}
\widehat{\bS}^{\frac{1}{2}}\left[\bH^H\bH-\bG^H\bG\right]\widehat{\bS}^{\frac{1}{2}}=\widehat{\bC}^{-H}\left[
\begin{array}{ccc}
\widehat{\mathbf{\Lambda}}_1-\bI & \b0\\
\b0 & \b0
\end{array}
\right] \widehat{\bC}^{-1}\succeq\b0.
\end{align}
From (\ref{wirtp24}), we note that
$\widehat{\bS}^{\frac{1}{2}}\left[\bH^H\bH-\bG^H\bG\right]\widehat{\bS}^{\frac{1}{2}}\succeq\b0$,
from which it follows that $\mathrm{rank}(\bQ^*)\le m$.
\end{proof}

\begin{rem}\label{rem2} 
From Theorem \ref{thm1}, one can easily confirm that the optimal
$\bQ^*$ can be full rank only in the case that $m=n_t$,
i.e. $\bH^H\bH\succ \bG^H\bG$. For all other scenarios, the optimal
$\bQ^*$ will be low rank.  The authors of \cite{Hassibi,Petropulu} use
the Karush-Kuhn-Tucker (KKT) conditions on problem (\ref{wirtp2}) to
make a similar statement, but they do not show what the rank of the
optimal $\bQ^*$ will be.
\end{rem}

The following lemma will be used for the computations in the next section.
\begin{lem}\label{lem3}
For the case of $\bH^H\bH\succ \bG^H\bG$, for any $n_t\times n_t$
matrix $\bS\succeq\b0$, all the generalized eigenvalues of the pencil
$(\bS^{\frac{1}{2}}\bH^H\bH\bS^{\frac{1}{2}}+\bI\;,\;\bS^{\frac{1}{2}}\bG^H\bG\bS^{\frac{1}{2}}+\bI)$
are strictly bigger than 1, i.e. $\mathbf{\Lambda} \succ \bI$,
\emph{iff} $\bS$ is full rank, i.e. $\bS \succ \b0$.
\end{lem}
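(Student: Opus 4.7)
The plan is to reduce the claim about generalized eigenvalues to a single positive definiteness statement about the matrix $\bS^{1/2}[\bH^H\bH-\bG^H\bG]\bS^{1/2}$, which can then be analyzed directly using the hypothesis $\bH^H\bH \succ \bG^H\bG$. Since both matrices in the pencil are strictly positive definite (we have added $\bI$ to two positive semidefinite matrices), the generalized eigenvalue decomposition in the style of (\ref{wirtp5})--(\ref{wirtp6}) always exists with an invertible $\bC$ and a positive diagonal $\mathbf{\Lambda}$.

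First I would write down that decomposition and then subtract the two identities, exactly as was done in the proof of Theorem~\ref{thm1}. This yields the key identity
\begin{equation*}
\bC^H\,\bS^{\frac{1}{2}}\bigl[\bH^H\bH-\bG^H\bG\bigr]\bS^{\frac{1}{2}}\,\bC=\mathbf{\Lambda}-\bI.
\end{equation*}
Since $\bC$ is invertible, the right-hand side $\mathbf{\Lambda}-\bI$ is (diagonal and) strictly positive definite if and only if the congruent matrix $\bS^{\frac{1}{2}}[\bH^H\bH-\bG^H\bG]\bS^{\frac{1}{2}}$ is strictly positive definite. So the claim $\mathbf{\Lambda}\succ \bI$ reduces to: $\bS^{\frac{1}{2}}[\bH^H\bH-\bG^H\bG]\bS^{\frac{1}{2}}\succ\b0$ iff $\bS\succ\b0$.

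For the forward direction ($\Leftarrow$), suppose $\bS\succ\b0$, so that $\bS^{\frac{1}{2}}$ is invertible. Then for any nonzero vector $\bv$, the vector $\bu=\bS^{\frac{1}{2}}\bv$ is nonzero, and by the hypothesis $\bH^H\bH\succ\bG^H\bG$,
\begin{equation*}
\bv^H\,\bS^{\frac{1}{2}}[\bH^H\bH-\bG^H\bG]\bS^{\frac{1}{2}}\,\bv=\bu^H[\bH^H\bH-\bG^H\bG]\bu>0,
\end{equation*}
establishing strict positive definiteness. For the converse ($\Rightarrow$), suppose $\bS$ is not full rank, so its null space is nontrivial. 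Any nonzero $\bv$ in the null space of $\bS$ lies also in the null space of $\bS^{\frac{1}{2}}$, whence $\bv^H\bS^{\frac{1}{2}}[\bH^H\bH-\bG^H\bG]\bS^{\frac{1}{2}}\bv=0$, contradicting the strict positive definiteness needed for $\mathbf{\Lambda}\succ\bI$.

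The argument is short and I don't expect a real obstacle; the only subtlety is to be explicit that positive definiteness is preserved (and reflected) under congruence by an invertible matrix, so that the statement about $\mathbf{\Lambda}-\bI$ transfers cleanly to a statement about $\bS^{\frac{1}{2}}[\bH^H\bH-\bG^H\bG]\bS^{\frac{1}{2}}$, and to note that $\mathrm{null}(\bS)=\mathrm{null}(\bS^{\frac{1}{2}})$ for a positive semidefinite $\bS$, which is what makes the reverse direction work.
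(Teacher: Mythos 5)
Your proposal is correct and follows essentially the same route as the paper: the paper's proof is the one-line instruction to ``consider the rank of both sides of (\ref{wirtp24})'', i.e.\ of the identity $\bC^H\,\bS^{\frac{1}{2}}[\bH^H\bH-\bG^H\bG]\bS^{\frac{1}{2}}\,\bC=\mathbf{\Lambda}-\bI$, which is exactly the identity you derive and exploit. You merely make explicit what that one-liner leaves implicit --- that definiteness is preserved under congruence by the invertible $\bC$, and that $\bS^{\frac{1}{2}}[\bH^H\bH-\bG^H\bG]\bS^{\frac{1}{2}}\succ\b0$ precisely when $\bS$ has trivial null space --- so there is no gap.
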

\begin{proof}
The claim is easily proved by considering the rank of both sides of  (\ref{wirtp24}).
\end{proof}

\section{Characterization of the Optimal Full-Rank Solution}

In this section, we characterize the secrecy capacity under the
average power constraint for a particular class of MIMO Gaussian
wiretap channel where the optimal solution $\bQ^*$ is full rank. While
necessary conditions for a full-rank $\bQ^*$ were characterized in the
previous section, here we derive sufficient conditions as well.

We begin by rewriting problem (\ref{wirtp2}) here:
\begin{align}\label{wirtp25}
\mathcal{C}_{sec}(P_t)&= \max_{\bQ\succeq\b0,\; \mathrm{Tr}(\bQ)=P_t} \log|\bH\bQ\bH^H+\bI|-\log|\bG\bQ\bG^H+\bI| \; .
\end{align}
The Lagrangian associated with this problem is given by
\begin{align}\label{wirtp26}
\mathcal{L}=\log|\bH\bQ\bH^H+\bI|-\log|\bG\bQ\bG^H+\bI|-\mu (\mathrm{Tr}(\bQ)-P_t)+ \mathrm{Tr}(\bM\bQ)
\end{align}
where $\mu>0$ and $\bM\succeq \b0$ are the Lagrange multipliers. The optimal $\bQ^*$ must satisfy the following KKT conditions:
\begin{align}
\bH^H(\bH\bQ^*\bH^H+\bI)^{-1}\bH-\bG^H(\bG\bQ^*\bG^H+\bI)^{-1}\bG &=\mu\bI - \bM  \label{wirtp27}\\
\mu (\mathrm{Tr}(\bQ^*)-P_t) &=0 \label{wirtp28} \\
\bQ^*\bM= \bM\bQ^*&=\b0 \; . 
\label{wirtp29}
\end{align}
Using the matrix inversion lemma \cite{Horn}, (\ref{wirtp27}) can be written as
\begin{align}\label{wirtp30}
(\bH^H\bH\bQ^*+\bI)^{-1}\bH^H\bH-\bG^H\bG(\bQ^*\bG^H\bG+\bI)^{-1} &=\mu\bI - \bM
\; .
\end{align}
Left multiplication by $(\bH^H\bH\bQ^*+\bI)$ and right multiplication by $(\bQ^*\bG^H\bG+\bI)$ of both sides of (\ref{wirtp30}) yields
\begin{align}
\bH^H\bH-\bG^H\bG &=\mu\,(\bH^H\bH\bQ^*+\bI) \; (\bQ^*\bG^H\bG+\bI)- \bM  \label{wirtp31}\\
&= \mu\,(\bG^H\bG\bQ^*+\bI) \; (\bQ^*\bH^H\bH+\bI)- \bM  \; , \label{wirtp32}
\end{align}
where in obtaining (\ref{wirtp31}) we have used the KKT 
condition~(\ref{wirtp29}), and Eq.~(\ref{wirtp32}) comes from the fact 
that~(\ref{wirtp31}) is Hermitian.
 
We are considering problem (\ref{wirtp25}) for the case that
$\bH^H\bH-\bG^H\bG$ is strictly positive definite, i.e. $\bH^H\bH\succ
\bG^H\bG$, since this is the necessary condition for having
a full rank optimal $\bQ^*$. As we characterize the full rank $\bQ^*$,
the sufficient conditions are revealed as well.

\begin{rem}\label{rem3}
By following exactly the same steps as in the proof of
\cite[Proposition 5]{Hassibi}, one can easily show that for the case
of $\bH^H\bH\succ \bG^H\bG$, the optimization problem (\ref{wirtp25})
is convex\footnote{In fact, the optimization problem (\ref{wirtp25})
is convex in $\bQ$ when $\bH^H\bH\succeq\bG^H\bG$.} in $\bQ$.
\end{rem}

Thus for the case of interest, the KKT conditions (\ref{wirtp28}),
(\ref{wirtp29}) and (\ref{wirtp31}) are necessary and sufficient
conditions for the optimality of $\bQ^*$. In other words, any
$\bQ\succeq\b0$ that satisfies those conditions is an optimal solution
for the problem~(\ref{wirtp25}). 
By the KKT condition (\ref{wirtp29}), a full
rank $\bQ^*$ follows that $\bM=\b0$. Thus, (\ref{wirtp31}) and
(\ref{wirtp32}) simplify to
\begin{align}
\bH^H\bH-\bG^H\bG &=\mu\,(\bH^H\bH\bQ^*+\bI) \; (\bQ^*\bG^H\bG+\bI)  \label{wirtp33}\\
&= \mu\,(\bG^H\bG\bQ^*+\bI) \; (\bQ^*\bH^H\bH+\bI) \; . \label{wirtp34}
\end{align}

\begin{lem}\label{lem4}
Let the diagonal matrix $\bD$ and the unitary matrix $\bPhi_{\bar{\bs}}$ respectively denote the eigenvalue and eigenvector matrices of $(\bar{\bS}^{\frac{1}{2}}\bG^H\bG\bar{\bS}^{\frac{1}{2}}+\bI)$, where we set $\bar{\bS}=(\bH^H\bH-\bG^H\bG)^{-1}$:
\begin{align}  \label{wirtp37}
(\bar{\bS}^{\frac{1}{2}}\bG^H\bG\bar{\bS}^{\frac{1}{2}}+\bI)=\bPhi_{\bar{\bs}}\; \bD\;\bPhi_{\bar{\bs}}^H.
\end{align}
Then we have
\begin{align}
\bH^H\bH &=\bar{\bS}^{-\frac{1}{2}}\, \bPhi_{\bar{\bs}}\; \bD  
\;\bPhi_{\bar{\bs}}^H  \bar{\bS}^{-\frac{1}{2}}  \label{wirtp35}\\
\bG^H\bG &=\bar{\bS}^{-\frac{1}{2}}\, \bPhi_{\bar{\bs}} \; (\bD-\bI)\; \bPhi_{\bar{\bs}}^H   \bar{\bS}^{-\frac{1}{2}} \; . \label{wirtp36}
\end{align}
\end{lem}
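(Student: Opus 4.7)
The plan is to derive both identities by algebraic manipulation starting from the defining equation $\bar{\bS}=(\bH^H\bH-\bG^H\bG)^{-1}$, using the given eigendecomposition~(\ref{wirtp37}). The key observation is that under the standing assumption $\bH^H\bH \succ \bG^H\bG$, the matrix $\bar{\bS}$ is positive definite and hence $\bar{\bS}^{-1/2}$ is well defined.

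First, I would rewrite the definition of $\bar{\bS}$ as $\bH^H\bH - \bG^H\bG = \bar{\bS}^{-1}$, then pre- and post-multiply both sides by $\bar{\bS}^{1/2}$ to obtain
\begin{align}
\bar{\bS}^{\frac{1}{2}}\bH^H\bH\,\bar{\bS}^{\frac{1}{2}} - \bar{\bS}^{\frac{1}{2}}\bG^H\bG\,\bar{\bS}^{\frac{1}{2}} = \bI,
\end{align}
which is equivalent to
\begin{align}
\bar{\bS}^{\frac{1}{2}}\bH^H\bH\,\bar{\bS}^{\frac{1}{2}} = \bar{\bS}^{\frac{1}{2}}\bG^H\bG\,\bar{\bS}^{\frac{1}{2}} + \bI.
\end{align}

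Next, I would invoke~(\ref{wirtp37}) to substitute $\bPhi_{\bar{\bs}}\bD\bPhi_{\bar{\bs}}^H$ for the right-hand side, giving the compact identity $\bar{\bS}^{1/2}\bH^H\bH\,\bar{\bS}^{1/2} = \bPhi_{\bar{\bs}}\bD\bPhi_{\bar{\bs}}^H$. Rearranging~(\ref{wirtp37}) itself yields $\bar{\bS}^{1/2}\bG^H\bG\,\bar{\bS}^{1/2} = \bPhi_{\bar{\bs}}(\bD-\bI)\bPhi_{\bar{\bs}}^H$, where the unitarity of $\bPhi_{\bar{\bs}}$ is used to absorb the identity inside the eigenvector product.

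Finally, I would pre- and post-multiply each of these two identities by $\bar{\bS}^{-1/2}$, which is legitimate since $\bar{\bS} \succ \b0$ under the hypothesis $\bH^H\bH \succ \bG^H\bG$, to arrive at~(\ref{wirtp35}) and~(\ref{wirtp36}). There is no real obstacle: the argument is a direct consequence of the definition of $\bar{\bS}$ combined with the postulated eigendecomposition. The only point worth flagging is that the invertibility of $\bar{\bS}$, needed to make $\bar{\bS}^{-1/2}$ meaningful, is precisely the full-rank assumption $\bH^H\bH \succ \bG^H\bG$ that the surrounding section is restricted to.
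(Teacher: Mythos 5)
Your proof is correct, and it is genuinely more direct than the paper's. The paper proves \eqref{wirtp36} exactly as you do (it ``comes directly from \eqref{wirtp37}''), but for \eqref{wirtp35} it takes a detour through Appendix A: it forms the generalized eigenvalue decomposition of the pencil $(\bar{\bS}^{\frac{1}{2}}\bH^H\bH\bar{\bS}^{\frac{1}{2}}+\bI\,,\,\bar{\bS}^{\frac{1}{2}}\bG^H\bG\bar{\bS}^{\frac{1}{2}}+\bI)$, subtracts the two defining relations to get $\bar{\bC}^H\bar{\bC}=\mathbf{\Lambda}_{\bar{\bs}}-\bI$, invokes Lemma~\ref{lem3} to conclude $\mathbf{\Lambda}_{\bar{\bs}}-\bI\succ\b0$ so that $\bar{\bC}=\bPhi_{\bar{\bs}}(\mathbf{\Lambda}_{\bar{\bs}}-\bI)^{\frac{1}{2}}$ for some unitary $\bPhi_{\bar{\bs}}$, and only then identifies $\bD=(\mathbf{\Lambda}_{\bar{\bs}}-\bI)^{-1}$ and reads off \eqref{wirtp35}--\eqref{wirtp36}. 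Your observation that $\bar{\bS}^{\frac{1}{2}}(\bH^H\bH-\bG^H\bG)\bar{\bS}^{\frac{1}{2}}=\bI$ forces $\bar{\bS}^{\frac{1}{2}}\bH^H\bH\bar{\bS}^{\frac{1}{2}}=\bar{\bS}^{\frac{1}{2}}\bG^H\bG\bar{\bS}^{\frac{1}{2}}+\bI=\bPhi_{\bar{\bs}}\bD\bPhi_{\bar{\bs}}^H$ short-circuits all of this, needs only $\bar{\bS}\succ\b0$ (which you correctly flag as the standing assumption $\bH^H\bH\succ\bG^H\bG$), and even yields for free the simultaneous diagonalization of the two matrices by $\bPhi_{\bar{\bs}}$ since they differ by $\bI$. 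What the paper's longer route buys is the explicit identification of $\bD$ with the generalized eigenvalues of the pencil, $\bD=(\mathbf{\Lambda}_{\bar{\bs}}-\bI)^{-1}$, which is reused downstream: it gives $\bD-\bI\succeq\b0$ and $\bI-\bD^{-1}\succeq\b0$ in \eqref{wrtap9} (needed after Theorem~\ref{thm2} and for the square roots in \eqref{wirtp43} and \eqref{wirtp47}), and it underlies Lemma~\ref{lem5}'s interpretation of $\bI+(\bD-\bI)^{-1}$ as the generalized eigenvalue matrix of $(\bH^H\bH\,,\,\bG^H\bG)$. Your argument would need a one-line supplement (e.g., $\bD-\bI=\bPhi_{\bar{\bs}}^H\bar{\bS}^{\frac{1}{2}}\bG^H\bG\bar{\bS}^{\frac{1}{2}}\bPhi_{\bar{\bs}}\succeq\b0$) to recover those facts, but for the lemma as stated it is complete.
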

\begin{proof}
Eq. (\ref{wirtp36}) comes directly from (\ref{wirtp37}). Please refer to Appendix A for details on obtaining (\ref{wirtp35}).   
\end{proof}

Using  (\ref{wirtp35}) and (\ref{wirtp36}) in (\ref{wirtp33}), after some
simplification we have
\begin{align} \label{wirtp38}
\bar{\bS}^{-\frac{1}{2}}\, \bPhi_{\bar{\bs}} \;\bPhi_{\bar{\bs}}^H  \bar{\bS}^{-\frac{1}{2}} &= \mu\,(\bar{\bS}^{-\frac{1}{2}}\, \bPhi_{\bar{\bs}}\; \bD  
\;\bPhi_{\bar{\bs}}^H  \bar{\bS}^{-\frac{1}{2}}\;\bQ^*+\bI) \; (\bQ^*\;\bar{\bS}^{-\frac{1}{2}}\, \bPhi_{\bar{\bs}} \; (\bD-\bI)\; \bPhi_{\bar{\bs}}^H   \bar{\bS}^{-\frac{1}{2}}+\bI)    \nonumber\\
&= \mu\,\bar{\bS}^{-\frac{1}{2}}\, \bPhi_{\bar{\bs}}\; (\bD  
\;\bPhi_{\bar{\bs}}^H  \bar{\bS}^{-\frac{1}{2}}\;\bQ^*+\bPhi_{\bar{\bs}}^H\bar{\bS}^{\frac{1}{2}}) \; (\bQ^*\;\bar{\bS}^{-\frac{1}{2}}\, \bPhi_{\bar{\bs}} \; (\bD-\bI)+\bar{\bS}^{\frac{1}{2}}\, \bPhi_{\bar{\bs}})\; \bPhi_{\bar{\bs}}^H   \bar{\bS}^{-\frac{1}{2}}  \nonumber\\
&= \mu\,\bar{\bS}^{-\frac{1}{2}}\, \bPhi_{\bar{\bs}}\; (\bD  \,\bW+\bI) \;\bPhi_{\bar{\bs}}^H\bar{\bS} \bPhi_{\bar{\bs}}\; \left(\bW (\bD-\bI)+\bI\right)\; \bPhi_{\bar{\bs}}^H   \bar{\bS}^{-\frac{1}{2}} \; ,
\end{align}
where in (\ref{wirtp38}) we defined
\begin{align} \label{wirtp39}
\bW=\bPhi_{\bar{\bs}}^H  \bar{\bS}^{-\frac{1}{2}}\;\bQ^*\bar{\bS}^{-\frac{1}{2}}\bPhi_{\bar{\bs}} \;.
\end{align}
From (\ref{wirtp38}), we get
\begin{align} \label{wirtp40}
\bI= \mu\, (\bD  \,\bW+\bI) \;\bPhi_{\bar{\bs}}^H\bar{\bS} \bPhi_{\bar{\bs}}\; \left(\bW (\bD-\bI)+\bI\right).
\end{align}

Let $\bX\succ\b0$ and $\bY\succ\b0$ be two diagonal matrices, which will be defined soon. Left  multiplication by $\bX$ and right multiplication by $\bY$ of both sides of (\ref{wirtp40}) gives
\begin{align} \label{wirtp41}
\bX\bY= \mu\, (\bX\bD  \,\bW+\bX) \;\bPhi_{\bar{\bs}}^H\bar{\bS} \bPhi_{\bar{\bs}}\; \left(\bW (\bD-\bI)\bY+\bY\right).
\end{align}
We find $\bX$ and $\bY$ by solving the following set of equations 
\begin{align} \label{wirtp42}
\begin{split}
\bX\bY&= \bI \\
\bX\bD &=(\bD-\bI)\bY
\end{split}
\end{align}
which results in
\begin{align} \label{wirtp43}
\bX=\bY^{-1}=\left(\bI-\bD^{-1}\right)^{\frac{1}{2}}\;.
\end{align}
Substituting (\ref{wirtp43}) into (\ref{wirtp41}), we have
\begin{align} \label{wirtp44}
\bI&= \mu\, \left(\bX\bD  \,\bW+\bX\right) \;\bPhi_{\bar{\bs}}^H\bar{\bS} \bPhi_{\bar{\bs}}\; \left(\bW\, \bD\bX+\bX^{-1}\right)  \nonumber\\
&= \mu\, \left(\bX\bD  \,\bW\, \bD\bX+\bX^2\bD\right) \;\bX^{-1}\bD^{-1}\bPhi_{\bar{\bs}}^H\bar{\bS} \bPhi_{\bar{\bs}}\bD^{-1}\bX^{-1}\; \left(\bX\bD\,\bW\, \bD\bX+\bD\right)   \nonumber\\
&= \mu\, \left(\bX\bD  \,\bW\, \bD\bX+(\bD-\bI)\right) \;\bX^{-1}\bD^{-1}\bPhi_{\bar{\bs}}^H\bar{\bS} \bPhi_{\bar{\bs}}\bD^{-1}\bX^{-1}\; \left(\bX\bD\,\bW\, \bD\bX+\bD\right) \; .
\end{align}
where we have used (\ref{wirtp43}) in obtaining (\ref{wirtp44}). 

Using an approach similar to what we used to obtain (\ref{wirtp44}) from (\ref{wirtp33}), one can show that\footnote{Clearly (\ref{wirtp45}) is also trivially obtained from (\ref{wirtp44}).} 
\begin{align} \label{wirtp45}
\bI&= \mu\, \left(\bX\bD  \,\bW\, \bD\bX+\bD\right) \;\bX^{-1}\bD^{-1}\bPhi_{\bar{\bs}}^H\bar{\bS} \bPhi_{\bar{\bs}}\bD^{-1}\bX^{-1}\; \left(\bX\bD\,\bW\, \bD\bX+(\bD-\bI)\right) \; .
\end{align}
Define $\bK$ as
\begin{align} \label{wirtp46}
\bK = \bX\bD  \,\bW\, \bD\bX = \left(\bI-\bD^{-1}\right)^{\frac{1}{2}} \bD\,\bW\, \bD\left(\bI-\bD^{-1}\right)^{\frac{1}{2}} \; ,
\end{align}
where $\bD$ and $\bW$ are respectively given by (\ref{wirtp37}) and (\ref{wirtp39}). Moreover, let 
\begin{align} \label{wirtp47}
\bar{\bSig}=\bX^{-1}\bD^{-1}\bPhi_{\bar{\bs}}^H\bar{\bS} \bPhi_{\bar{\bs}}\bD^{-1}\bX^{-1}=  \left(\bI-\bD^{-1}\right)^{-\frac{1}{2}}\bD^{-1}\bPhi_{\bar{\bs}}^H\bar{\bS} \bPhi_{\bar{\bs}}\bD^{-1} \left(\bI-\bD^{-1}\right)^{-\frac{1}{2}} \; .
\end{align}
Using (\ref{wirtp46}) and (\ref{wirtp47}) in  (\ref{wirtp44}) and (\ref{wirtp45}), we have
\begin{align} \label{wirtp48}
\begin{split}
\bI&= \mu\, \left(\bK+\bD\right) \;\bar{\bSig}\; \left(\bK+(\bD-\bI)\right) \\
&=\mu\, \left(\bK+(\bD-\bI)\right) \;\bar{\bSig}\;  \left(\bK+\bD\right) \; .
\end{split}
\end{align}
We note from (\ref{wirtp48}) that 
\begin{align} \label{wirtp49}
\mu\,\bar{\bSig}= \left(\bK+\bD\right)^{-1}\left(\bK+(\bD-\bI)\right)^{-1}= \left(\bK+(\bD-\bI)\right)^{-1} \left(\bK+\bD\right)^{-1} \;.
\end{align}
This result implies that, for the optimal $\bQ^*$, $(\bK+(\bD-\bI))^{-1}$, $(\bK+\bD)^{-1}$ and $\bar{\bSig}$ commute and have the same eigenvectors \cite{Horn}.

Denote the eigenvalue decomposition of $\bar{\bSig}$ as 
\begin{align} \label{wirtp50}
\bar{\bSig}= \bU\,\bOmega\,\bU^H\;.
\end{align}
Based on the argument made after (\ref{wirtp49}), we have
\begin{align} \label{wirtp51}
\begin{split}
\bK+(\bD-\bI)&= \bU\,\mathbf{\Lambda}_1\,\bU^H\; \\
\bK+\bD&= \bU\,\mathbf{\Lambda}_2\,\bU^H\; ,
\end{split}
\end{align}
where one can easily confirm that
$\mathbf{\Lambda}_2=\mathbf{\Lambda}_1+\bI$. By replacing
(\ref{wirtp50}) and (\ref{wirtp51}) in (\ref{wirtp48}), and noting
that $\bU^H\bU=\bU\bU^H=\bI$, and that $\mathbf{\Lambda}_1$,
$\mathbf{\Lambda}_2$ and $\bOmega$ are all diagonal, we have
\begin{align} \label{wirtp52}
\bI= \mu\,\bU \mathbf{\Lambda}_1\,\bOmega\,\mathbf{\Lambda}_2\bU^H= \mu\,\mathbf{\Lambda}_1\,\bOmega\,\mathbf{\Lambda}_2= \mu\, (\mathbf{\Lambda}^2_1+\mathbf{\Lambda}_1)\, \bOmega \; .
\end{align}
Recall that the unknown parameters in (\ref{wirtp52}) are the
diagonal matrix $\mathbf{\Lambda}_1$ and the scalar $\mu>0$. Let
$\lambda_{i1}$ and $\omega_i$ denote the $i$th diagonal element of
$\mathbf{\Lambda}_1$ and $\bOmega$, respectively. From
(\ref{wirtp52}), we can solve for $\lambda_{i1}$ and obtain
\begin{align}\label{wirtp53}
\lambda_{i1}=\frac{1}{2}\left(-1+\sqrt{1+\frac{4}{\mu \omega_i}}\right) \; ,
\end{align}
where the Lagrange multiplier $\mu>0$ is chosen to satisfy the power
constraint $\mathrm{Tr}(\bQ^*)= P_t$, as will be explained below.

\begin{thm}\label{thm2}
The optimal full-rank input covariance matrix that attains the secrecy
capacity for the average power constraint is given by
\begin{align}\label{wirtp54}
\bQ^*=\bar{\bS}^{\frac{1}{2}}\bPhi_{\bar{\bs}}\left(\bI-\bD^{-1}\right)^{-\frac{1}{2}}\bD^{-1}\left(\bU \mathbf{\Lambda}_1\bU^H +\bI-\bD\right) \bD^{-1}\left(\bI-\bD^{-1}\right)^{-\frac{1}{2}} \bPhi_{\bar{\bs}}^H  \bar{\bS}^{\frac{1}{2}}
\end{align}
\emph{iff} 
\begin{itemize}
\item $\bH^H\bH-\bG^H\bG\succ\b0$, and 
\item for the given $P_t$,  
\begin{align} \label{wirtp55}
\bU \mathbf{\Lambda}_1\bU^H \succ\bD-\bI
\end{align}
\end{itemize}
where  $\mathbf{\Lambda}_1$ is given by (\ref{wirtp53}).
\end{thm}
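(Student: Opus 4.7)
The plan is to prove both directions of the iff in Theorem~\ref{thm2} by combining the KKT analysis already developed with the matrix manipulations carried through (\ref{wirtp33})--(\ref{wirtp53}), and then invoking the convexity established in Remark~\ref{rem3} to conclude optimality.

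For the necessity direction, Remark~\ref{rem2} has already shown that $\bH^H\bH-\bG^H\bG\succ\b0$ is required for any full-rank $\bQ^*$, so in particular $\bar{\bS}=(\bH^H\bH-\bG^H\bG)^{-1}$ in (\ref{wirtp54}) is well-defined. Assuming $\bQ^*$ is full rank, (\ref{wirtp29}) forces $\bM=\b0$, and the entire chain from (\ref{wirtp33}) through (\ref{wirtp52}) applies, producing the diagonal entries $\lambda_{i1}$ in (\ref{wirtp53}). To recover the closed form (\ref{wirtp54}), I would simply invert the substitutions that produced this chain: from (\ref{wirtp51}) read off $\bK=\bU\mathbf{\Lambda}_1\bU^H-(\bD-\bI)$; invert (\ref{wirtp46}) to get $\bW=\bD^{-1}\bX^{-1}\bK\bX^{-1}\bD^{-1}$; invert (\ref{wirtp39}) to get $\bQ^*=\bar{\bS}^{\frac{1}{2}}\bPhi_{\bar{\bs}}\bW\bPhi_{\bar{\bs}}^H\bar{\bS}^{\frac{1}{2}}$; and finally substitute $\bX=(\bI-\bD^{-1})^{1/2}$. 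Collapsing these substitutions reproduces (\ref{wirtp54}) exactly. The full-rank hypothesis $\bQ^*\succ\b0$ then translates, through the invertible congruences $\bar{\bS}^{\frac{1}{2}}\bPhi_{\bar{\bs}}(\cdot)\bPhi_{\bar{\bs}}^H\bar{\bS}^{\frac{1}{2}}$ and $\bD^{-1}\bX^{-1}(\cdot)\bX^{-1}\bD^{-1}$, into $\bK\succ\b0$, which is precisely the condition $\bU\mathbf{\Lambda}_1\bU^H\succ\bD-\bI$ appearing in (\ref{wirtp55}).

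For the sufficiency direction, I would assume both stated conditions and define $\bQ^*$ by (\ref{wirtp54}) with $\mu>0$ chosen so that $\mathrm{Tr}(\bQ^*)=P_t$. Condition (\ref{wirtp55}) then ensures $\bQ^*\succ\b0$, so taking $\bM=\b0$ satisfies (\ref{wirtp29}); the choice of $\mu$ handles (\ref{wirtp28}); and (\ref{wirtp31}) collapses to (\ref{wirtp33}), which can be verified by running the algebra of (\ref{wirtp38})--(\ref{wirtp52}) in the reverse direction using the definitions of $\bK$, $\bar{\bSig}$, $\bW$, and the diagonal $\mathbf{\Lambda}_1$ from (\ref{wirtp53}). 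By Remark~\ref{rem3} the problem is convex on the regime $\bH^H\bH\succ\bG^H\bG$, so satisfaction of the KKT conditions is sufficient for $\bQ^*$ to attain the secrecy capacity.

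The step I expect to be most delicate is confirming that an appropriate $\mu>0$ exists for which both $\mathrm{Tr}(\bQ^*(\mu))=P_t$ and inequality (\ref{wirtp55}) hold simultaneously. Each $\lambda_{i1}(\mu)$ in (\ref{wirtp53}) is continuous and strictly decreasing in $\mu$, so $\mathrm{Tr}(\bQ^*(\mu))$ is monotone in $\mu$; inequality (\ref{wirtp55}) then carves out precisely the range of $P_t$ for which a full-rank optimum exists. For $P_t$ outside this range the optimal solution must drop rank and Theorem~\ref{thm1} governs its structure, consistent with the intuition that at low power the transmitter concentrates energy on the directions where the legitimate channel dominates the eavesdropper most strongly.
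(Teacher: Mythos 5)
Your proposal is correct and follows essentially the same route as the paper: the paper's own proof is a one-line instruction to read $\bK$ off from (\ref{wirtp51}) and back-substitute through (\ref{wirtp46}) and (\ref{wirtp39}), relying on the KKT/convexity discussion preceding the theorem for sufficiency, which is exactly the computation and logic you spell out. Your version is simply more explicit about the two directions of the iff and about the equivalence $\bQ^*\succ\b0 \Leftrightarrow \bK\succ\b0$ under the invertible congruences, which the paper leaves implicit.
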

\begin{proof}
The proof is obtained by obtaining $\bK$ from (\ref{wirtp51}), and
substituting it back into (\ref{wirtp46}) and (\ref{wirtp39}) via
straightforward computations.
\end{proof}

To fully characterize $\bQ^*$, one must obtain the Lagrange
multiplier $\mu>0$ such that $\mathrm{Tr}(\bQ^*)= P_t$. As
(\ref{wirtp53}) shows, $\mathrm{Tr}(\bQ^*)$ is monotonically decreasing
with $\mu$:
$$\lim_{\mu\rightarrow 0}\mathrm{Tr}(\bQ^*) = \infty, \quad \quad \mathrm{and}  \quad \quad \lim_{\mu\rightarrow \infty}\mathrm{Tr}(\bQ^*) = -\mathrm{Tr}(\bar{\bS}^{\frac{1}{2}}\bPhi_{\bar{\bs}}\bD^{-1}\bPhi_{\bar{\bs}}^H  \bar{\bS}^{\frac{1}{2}})<0\;.$$
Thus for any transmit power $P_t$, there exists a Lagrange
multiplier $\mu>0$ for which $\mathrm{Tr}(\bQ^*)= P_t$. The appropriate
value of $\mu$ can be easily found using, for example, the bisection method.

Note that Theorem \ref{thm2} also reveals the necessary and sufficient
conditions for having a full rank optimal $\bQ^*$. While for any
transmit power $P_t$ one can find a Lagrange multiplier $\mu>0$ for
which $\mathrm{Tr}(\bQ^*)= P_t$ and $\bU \mathbf{\Lambda}_1\bU^H
\succ\b0$, to have a full rank $\bQ^*$, (\ref{wirtp55}) must be
satisfied. Also recall from~(\ref{wrtap9}) that $\bD-\bI\succeq\b0$. The
flowchart in Fig.~\ref{fig1} summarizes the steps required to
calculate the optimal full-rank $\bQ^*$ for the case of
$\bH^H\bH-\bG^H\bG\succ\b0$.


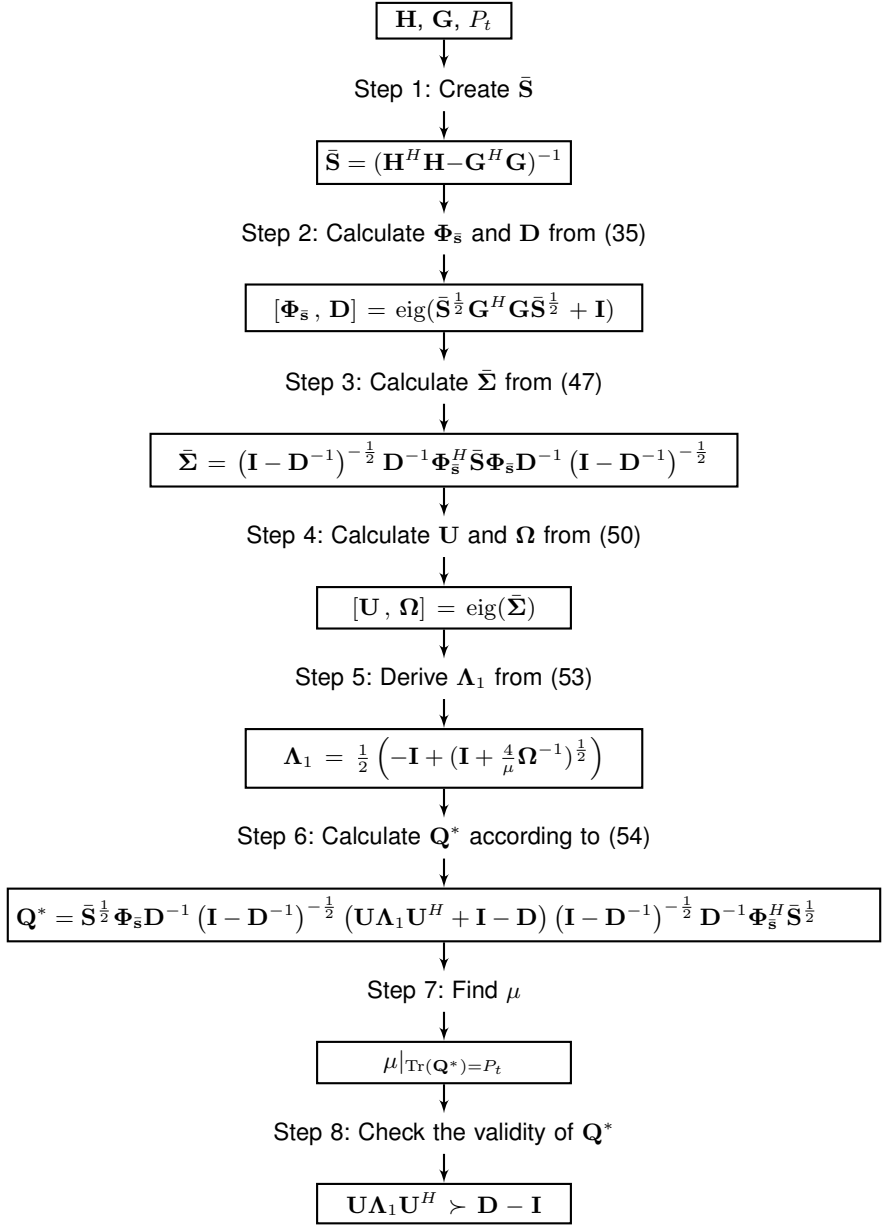
\begin{figure}[t]
\begin{center}
  \sffamily\footnotesize
  \begin{tikzpicture}[auto,
    decision/.style={diamond, draw=black, thick, fill=white, 
    text width=10em, text badly centered, minimum height=1pt,  inner sep=1pt},
    block_center1/.style ={rectangle, draw=black, thick, fill=white,
      text width=5em, text centered, minimum height=1em},
    block_center2/.style ={rectangle, draw=black, thick, fill=white,
      text width=10em, text centered, minimum height=1em},
    block_center3/.style ={rectangle, draw=black, thick, fill=white,
      text width=16em, text centered, minimum height=1em},
 block_center4/.style ={rectangle, draw=black, thick, fill=white,
      text width=24em, text centered, minimum height=1em},
    block_noborder/.style ={rectangle, draw=none, thick, fill=none,
      text width=18em, text centered, minimum height=1em},
    block_lost/.style ={rectangle, draw=black, thick, fill=white,
      text width=36em, text ragged, minimum height=1em},
      line/.style ={draw, thick, -latex', shorten >=0pt}]
    \matrix [column sep=5mm,row sep=4mm] {
      \node [block_center1] (n1) {$\bH$, $\bG$, $P_t$};\\
      \node [block_noborder] (s1) {Step 1: Create $\bar{\bS}$}; \\
      \node [block_center2] (n2) {$\bar{\bS}=(\bH^H\bH-\bG^H\bG)^{-1}$};\\
      \node [block_noborder] (s2) {Step 2: Calculate $\bPhi_{\bar{\bs}}$ and $\bD$ from (\ref{wirtp37})}; \\
      \node [block_center3] (n3) {$[\bPhi_{\bar{\bs}}\,,\,\bD]=\mathrm{eig}(\bar{\bS}^{\frac{1}{2}}\bG^H\bG\bar{\bS}^{\frac{1}{2}}+\bI)$}; \\
      \node [block_noborder] (s3) {Step 3: Calculate $\bar{\bSig}$ from (\ref{wirtp47})};\\ 
      \node [block_center4] (n4) {$\bar{\bSig}= \left(\bI-\bD^{-1}\right)^{-\frac{1}{2}}\bD^{-1}\bPhi_{\bar{\bs}}^H\bar{\bS} \bPhi_{\bar{\bs}}\bD^{-1} \left(\bI-\bD^{-1}\right)^{-\frac{1}{2}}$};\\
      \node [block_noborder] (s4) {Step 4: Calculate $\bU$ and $\bOmega$ from (\ref{wirtp50})};\\ 
      \node [block_center2] (n5) {$[\bU\,,\,\bOmega]=\mathrm{eig}(\bar{\bSig})$};\\ 
      \node [block_noborder] (s5) {Step 5: Derive $\mathbf{\Lambda}_1$ from (\ref{wirtp53})};\\ 
      \node [block_center3] (n6) {$\mathbf{\Lambda}_1=\frac{1}{2}\left(-\bI+(\bI+\frac{4}{\mu}\bOmega^{-1})^\frac{1}{2}\right)$};\\ 
      \node [block_noborder] (s6) {Step 6: Calculate $\bQ^*$ according to (\ref{wirtp54})};\\ 
      \node [block_lost] (n7) {$\bQ^*=\bar{\bS}^{\frac{1}{2}}\bPhi_{\bar{\bs}}\bD^{-1}\left(\bI-\bD^{-1}\right)^{-\frac{1}{2}}\left(\bU \mathbf{\Lambda}_1\bU^H +\bI-\bD\right) \left(\bI-\bD^{-1}\right)^{-\frac{1}{2}}\bD^{-1} \bPhi_{\bar{\bs}}^H  \bar{\bS}^{\frac{1}{2}}$};\\
      \node [block_noborder] (s7) {Step 7: Find $\mu$};\\ 
     \node [block_center2] (n8){$\mu|_{\mathrm{Tr}(\bQ^*)=P_t}$};\\
    \node [block_noborder] (s8) {Step 8: Check the validity of $\bQ^*$};\\ 
    \node[block_center2] (n9){$\bU \mathbf{\Lambda}_1\bU^H \succ\bD-\bI$};\\
    };
    \begin{scope}[every path/.style=line]
     \path (n1) -- (s1); 
     \path (s1)  -- (n2);
     \path (n2)  -- (s2);
     \path (s2)  -- (n3);
     \path (n3) -- (s3);
     \path (s3) -- (n4);
     \path (n4) -- (s4);
     \path (s4) -- (n5);
    \path (n5) -- (s5);
    \path (s5) -- (n6);
    \path (n6) -- (s6);
    \path (s6) -- (n7);
    \path (n7) -- (s7);
    \path (s7) -- (n8);
    \path (n8) -- (s8);
   \path (s8) -- (n9);
    \end{scope}
  \end{tikzpicture}
\captionof{figure}{\small Flowchart for obtaining the optimal full-rank $\bQ^*$.}
\label{fig1}
\end{center}
\end{figure}

\begin{rem}\label{rem4} By replacing $\bQ^*$ given by (\ref{wirtp54}) into (\ref{wirtp25}), the optimal input covariance matrix $\bQ^*$ given by
Theorem~\ref{thm2} attains the secrecy capacity
\begin{align}
\mathcal{C}_{sec}(P_t)&=  \log|\mathbf{\Lambda}_1|- \log|\mathbf{\Lambda}_1+\bI|+\log|\bD|-\log|\bD-\bI|   \nonumber\\
&=\log|\bI-(\mathbf{\Lambda}_1+\bI)^{-1}|+\log|\bI+(\bD-\bI)^{-1}| \; ,
\label{wirtp56}
\end{align}
where $\mathbf{\Lambda}_1$ and $\bD$ are diagonal matrices,
respectively given by (\ref{wirtp53}) and (\ref{wirtp37}). Note that
while both $\log$ terms in (\ref{wirtp56}) return non-negative values,
the first term depends on both the channels and the power $P_t$, while
the second term just depends on the channels (see Lemma \ref{lem5}).
\end{rem}

In the following we show that for the case of
$\bH^H\bH-\bG^H\bG\succeq\b0$, i.e. when at least one of the
eigenvalues of $\bH^H\bH-\bG^H\bG$ is zero, there exists an equivalent
wiretap channel with the same secrecy capacity of the original wiretap
channel. For this case, the secrecy capacity is characterized too.

\begin{thm}\label{thm3}
For the MIMO Gaussian wiretap channel defined by direct and cross
channels $\bH$ and $\bG$ respectively, with
$\bH^H\bH-\bG^H\bG\succeq\b0$, there exists an equivalent wiretap
channel with $\bH_{eq}$ and $\bG_{eq}$ such that
$\bH_{eq}^H\bH_{eq}-\bG_{eq}^H\bG_{eq}\succ\b0$, and
$\mathcal{C}_{sec}(\bH,\bG, P_t)=\mathcal{C}_{sec}(\bH_{eq},\bG_{eq},
P_t)$.
\end{thm}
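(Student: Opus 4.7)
The plan is to take $\bH_{eq}$ and $\bG_{eq}$ to be the restrictions of $\bH$ and $\bG$ to the support of the original problem's optimizer. Let $\bQ^*$ achieve $\mathcal{C}_{sec}(\bH,\bG,P_t)$, set $b=\mathrm{rank}(\bQ^*)$, and write the thin eigendecomposition $\bQ^*=\bU\bD\bU^H$ with $\bU\in\complexC^{n_t\times b}$ having orthonormal columns and $\bD\succ\b0$ diagonal. I then propose $\bH_{eq}:=\bH\bU\in\complexC^{n_r\times b}$ and $\bG_{eq}:=\bG\bU\in\complexC^{n_e\times b}$, that is, view the equivalent wiretap channel as the original one restricted to the subspace in which the optimal transmitter actually places its signal. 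We may exclude the trivial case $\bH^H\bH=\bG^H\bG$, for which $\mathcal{C}_{sec}=0$ and $b=0$; that case is already covered by Remark~\ref{rem1}.

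The main obstacle is verifying the strict inequality $\bH_{eq}^H\bH_{eq}-\bG_{eq}^H\bG_{eq}=\bU^H\bA\bU\succ\b0$, where $\bA:=\bH^H\bH-\bG^H\bG\succeq\b0$. The PSD side is trivial; what needs justification is that $\mathrm{range}(\bU)\cap\mathrm{null}(\bA)=\{\b0\}$. This is where I would lean on equation~(\ref{wirtp24}): its right-hand side has rank exactly $b$, while the left-hand side factors as $(\bA^{1/2}\widehat{\bS}^{1/2})^H(\bA^{1/2}\widehat{\bS}^{1/2})$. Rank counting then yields $\mathrm{rank}(\bA^{1/2}\widehat{\bS}^{1/2})=b=\mathrm{rank}(\widehat{\bS}^{1/2})$, which forces $\bA^{1/2}$ to be injective on $\mathrm{range}(\widehat{\bS})=\mathrm{range}(\bQ^*)=\mathrm{range}(\bU)$. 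Combined with $\bA\succeq\b0$, this yields $\bU^H\bA\bU\succ\b0$.

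The capacity equality then falls out from two routine inclusions. For $\mathcal{C}_{sec}(\bH_{eq},\bG_{eq},P_t)\leq\mathcal{C}_{sec}(\bH,\bG,P_t)$, any admissible $\bQ_{eq}\succeq\b0$ with $\mathrm{Tr}(\bQ_{eq})\leq P_t$ lifts to $\bQ=\bU\bQ_{eq}\bU^H$ with the same trace (since $\bU^H\bU=\bI_b$) and the same rate (since $\bH\bQ\bH^H=\bH_{eq}\bQ_{eq}\bH_{eq}^H$, and analogously for $\bG$). For $\mathcal{C}_{sec}(\bH,\bG,P_t)\leq\mathcal{C}_{sec}(\bH_{eq},\bG_{eq},P_t)$, the explicit choice $\bQ_{eq}=\bD$ is admissible (its trace equals $\mathrm{Tr}(\bQ^*)=P_t$) and reproduces $\bH\bQ^*\bH^H$ and $\bG\bQ^*\bG^H$ exactly, so $R_{eq}(\bD)=R(\bQ^*)=\mathcal{C}_{sec}(\bH,\bG,P_t)$.

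A pleasing by-product is that $\bD\succ\b0$ is automatically a full-rank optimizer for $(\bH_{eq},\bG_{eq})$ at power $P_t$, so Theorem~\ref{thm2} applies verbatim to the equivalent channel; this is presumably the reason Theorem~\ref{thm3} is stated at this point of the paper, as it extends the full-rank characterization to the semi-definite setting.
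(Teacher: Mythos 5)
Your proof is correct, but it takes a genuinely different route from the paper's. You build the equivalent channel from the support of the \emph{optimizer}: $\bH_{eq}=\bH\bU$, $\bG_{eq}=\bG\bU$ with $\bU$ spanning $\mathrm{range}(\bQ^*)$, and the one nontrivial step --- strict definiteness of $\bU^H(\bH^H\bH-\bG^H\bG)\bU$ --- is correctly extracted from the identity (\ref{wirtp24}): both sides there have rank $b=\mathrm{rank}(\bQ^*)$, the left side factors through $\widehat{\bS}^{1/2}$, so $(\bH^H\bH-\bG^H\bG)^{1/2}$ is injective on $\mathrm{range}(\widehat{\bS})=\mathrm{range}(\bU)$; the two-sided capacity comparison by lifting and restricting covariances is routine and sound. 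The paper instead builds the equivalent channel from the \emph{channels alone}: it diagonalizes $\bH^H\bH-\bG^H\bG=\bPsi\,\mathrm{diag}(\mathbf{\Lambda}_m,\b0)\,\bPsi^H$, rewrites the rate as $\log|\bI+(\bH^H\bH-\bG^H\bG)\bQ(\bI+\bG^H\bG\bQ)^{-1}|$, argues the optimal rotated covariance is supported on the first $m$ coordinates, and a block-matrix computation gives $\bH_{eq}^H\bH_{eq}=\mathbf{\Lambda}_m+\bJ_1$ and $\bG_{eq}^H\bG_{eq}=\bJ_1$, where $\bJ_1$ is the leading $m\times m$ block of $\bPsi^H\bG^H\bG\bPsi$. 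The trade-off is clear: the paper's reduction is computable from $(\bH,\bG)$ before solving anything, which is the practical point of the theorem --- reduce, apply Theorem \ref{thm2} to the reduced channel, and lift the answer back via (\ref{wrtap20}) --- whereas your construction presupposes knowledge of $\mathrm{range}(\bQ^*)$ and so proves the existential statement as literally written without yielding an algorithm. On the other hand, your reduced problem is guaranteed to admit a full-rank optimizer (namely $\bD$) and lives in dimension $b\le m$, something the paper's reduction does not ensure (it must still check the second condition of Theorem \ref{thm2}).
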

\begin{proof}
Please refer to Appendix B for the proof and the characterization of 
the equivalent channels $\bH_{eq}$ and $\bG_{eq}$.
\end{proof}

Although at this time a proof is unavailable, we conjecture that the
secrecy capacity of any general MIMO wiretap channel is given by an
equivalent wiretap channel with $\bH_{eq}$ and $\bG_{eq}$ such that
$\bH_{eq}^H\bH_{eq}^H-\bG^H_{eq}\bG_{eq}\succ\b0$, and
$\mathcal{C}_{sec}(\bH,\bG, P_t)=\mathcal{C}_{sec}(\bH_{eq},\bG_{eq},
P_t)$. If one cannot find such an equivalent wiretap channel, the
secrecy capacity of the main channel is zero, i.e.,
$\mathcal{C}_{sec}(\bH,\bG, P_t)=0$.  Besides the case of
$\bH^H\bH-\bG^H\bG\succeq\b0$, the case for which the optimal input
covariance matrix is rank 1 is another case supporting this
conjecture. For the rank~1 case, it is shown in \cite{KhistiMISO} that
the optimal input covariance matrix is given by $\bQ^*=P_t\,
\bu\bu^H$, where $\bu$ is the normalized principal eigenvector
corresponding to the largest eigenvalue $\lambda_1$ of the pencil
$(\bI+P_t\bH^H\bH\;,\;\bI+P_t\bG^H\bG)$. For this case, if
$\lambda_1>1$, the secrecy capacity is $\mathcal{C}_{sec}(\bH,\bG,
P_t)=\log(\lambda_1)$, and the equivalent wiretap channel is defined
such that $\bh^H_{eq}\bh_{eq}=\bu^H\bH^H\bH\bu$ and
$\bg^H_{eq}\bg_{eq}= \bu^H\bG^H\bG\bu$.

\section{REMARKS REGARDING $\bQ^*$}

This section discusses some interesting points regarding the optimal
solution in~(\ref{wirtp54}). For the following observations, one can
assume when required that both conditions for a full-rank $\bQ^*$
given in Theorem (\ref{thm2}) are satisfied.
Let $\gamma_i$, $i=1,\cdots, n_t$, be the generalized eigenvalues of
the pencil $(\bH^H\bH\;,\;\bG^H\bG)$. Then from the definition
\cite{Charles}, $\gamma_i=\sigma_i^2$, where $\sigma_i$ is the $i$th
generalized singular value of $(\bH\;,\;\bG)$.
\begin{lem}\label{lem5}
The second term in the secrecy capacity expression  (\ref{wirtp56}) is only channel dependent and is equal to $\sum_{i=1}^{n_t}\log(\sigma^2_i)$. 
\end{lem}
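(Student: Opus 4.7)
The plan is to show that the second logarithm in (\ref{wirtp56}) collapses to $\log\!\bigl(|\bH^H\bH|/|\bG^H\bG|\bigr)$ by purely algebraic manipulation, and then to identify this ratio with $\prod_{i=1}^{n_t}\sigma_i^2$ via the definition of the generalized singular values of the pair $(\bH,\bG)$.

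First, I would use the simple identity $\bI+(\bD-\bI)^{-1}=(\bD-\bI)^{-1}\bD$ to write
\[
\log\bigl|\bI+(\bD-\bI)^{-1}\bigr|=\log|\bD|-\log|\bD-\bI|.
\]
Next, I would exploit the definition of $\bD$ in (\ref{wirtp37}) together with the fact that $\bar{\bS}$ is invertible (since $\bH^H\bH\succ\bG^H\bG$). Factoring gives
\[
\bar{\bS}^{\frac12}\bG^H\bG\bar{\bS}^{\frac12}+\bI=\bar{\bS}^{\frac12}\bigl(\bG^H\bG+\bar{\bS}^{-1}\bigr)\bar{\bS}^{\frac12}=\bar{\bS}^{\frac12}\bH^H\bH\bar{\bS}^{\frac12},
\]
where the last equality uses $\bar{\bS}^{-1}=\bH^H\bH-\bG^H\bG$. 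Therefore $|\bD|=|\bar{\bS}|\,|\bH^H\bH|$, and similarly $|\bD-\bI|=|\bar{\bS}^{\frac12}\bG^H\bG\bar{\bS}^{\frac12}|=|\bar{\bS}|\,|\bG^H\bG|$. Dividing, the factor $|\bar{\bS}|$ cancels and I obtain
\[
\log\bigl|\bI+(\bD-\bI)^{-1}\bigr|=\log\frac{|\bH^H\bH|}{|\bG^H\bG|},
\]
which manifestly depends only on the channel matrices, establishing the first half of the claim.

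For the second half, I would invoke the definition stated right before the lemma: $\sigma_i^2=\gamma_i$ are the generalized eigenvalues of the pencil $(\bH^H\bH,\bG^H\bG)$. When $\bG^H\bG$ is invertible these coincide with the eigenvalues of $(\bG^H\bG)^{-1}\bH^H\bH$, so
\[
\prod_{i=1}^{n_t}\sigma_i^2=\det\bigl((\bG^H\bG)^{-1}\bH^H\bH\bigr)=\frac{|\bH^H\bH|}{|\bG^H\bG|}.
\]
Taking logarithms and combining with the previous display yields $\log|\bI+(\bD-\bI)^{-1}|=\sum_{i=1}^{n_t}\log\sigma_i^2$, which is Lemma \ref{lem5}.

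There is really no serious obstacle here; the entire argument is a short chain of determinant identities. The only subtle point worth flagging is the invertibility of $\bG^H\bG$ (equivalently, that no $\sigma_i$ is zero or infinite): it is implicit in the full-rank regime under consideration, since $\bD-\bI\succ\bzero$ is required for the expression $(\bD-\bI)^{-1}$ in (\ref{wirtp56}) to be well defined, and $\bD-\bI\succ\bzero$ is equivalent (via the factorization above together with $\bar{\bS}\succ\bzero$) to $\bG^H\bG\succ\bzero$.
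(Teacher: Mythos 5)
Your proof is correct, but it takes a different (and more elementary) route than the paper's. The paper proves a stronger, term-by-term statement: using the simultaneous diagonalization from Appendix A (Eqs.~(\ref{wrtap7})--(\ref{wrtap8})), it exhibits $\bar{\bS}^{\frac{1}{2}}\bPhi_{\bar{\bs}}$ as a matrix of generalized eigenvectors of the pencil $(\bH^H\bH,\bG^H\bG)$ and concludes that $\bI+(\bD-\bI)^{-1}$ \emph{is} the generalized eigenvalue matrix, so that each $\sigma_i^2$ is identified with a specific diagonal entry $1+(d_i-1)^{-1}$. You instead work only at the level of determinants: from $|\bD|=|\bar{\bS}|\,|\bH^H\bH|$ and $|\bD-\bI|=|\bar{\bS}|\,|\bG^H\bG|$ you get $\log|\bI+(\bD-\bI)^{-1}|=\log\bigl(|\bH^H\bH|/|\bG^H\bG|\bigr)=\log\det\bigl((\bG^H\bG)^{-1}\bH^H\bH\bigr)=\sum_i\log\sigma_i^2$. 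This is weaker information but exactly what the lemma asserts, and it needs less machinery --- you never invoke the common eigenvector structure, only the definition of $\bD$ in (\ref{wirtp37}) and the identity $\bG^H\bG+\bar{\bS}^{-1}=\bH^H\bH$. The trade-off is that the paper's eigenvalue-level identification is reused immediately afterwards (e.g., in the remark that $\sigma_i\to\infty$ precisely for the entries with $d_i=1$ when $\mathrm{rank}(\bG)<n_t$), whereas your determinant argument genuinely requires $\bG^H\bG\succ\bzero$ for the ratio $|\bH^H\bH|/|\bG^H\bG|$ to be finite; you flag this correctly, and it is consistent with the paper's own $\epsilon$-perturbation convention in Remark~\ref{remap1} for rank-deficient $\bG$.
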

\begin{proof}
From (\ref{wrtap7}) we have:
\begin{align}
\bH^H\bH \;\bar{\bS}^{\frac{1}{2}}\bPhi_{\bar{\bs}}&=\bar{\bS}^{-\frac{1}{2}}\, \bPhi_{\bar{\bs}}\;\bD  \nonumber\\
&= \bG^H\bG\; \bar{\bS}^{\frac{1}{2}}\bPhi_{\bar{\bs}}  \left(\bD-\bI\right)^{-1}\;\bD\label{wirtp57}\\
&=  \bG^H\bG\; \bar{\bS}^{\frac{1}{2}}\bPhi_{\bar{\bs}} \left(\bI+(\bD-\bI)^{-1}\right) \; , \nonumber
\end{align}
where (\ref{wirtp57}) comes from (\ref{wrtap8}). Thus, from the
definition \cite{Horn}, the generalized eigenvalue matrix of
$(\bH^H\bH\;,\;\bG^H\bG)$ is $\left(\bI+(\bD-\bI)^{-1}\right)$, which
completes the proof.
\end{proof}

Note that the definition of singular values here is slightly different
from what is given in \cite{KhistiMIMO}. Here $\sigma_i$ may be
$\infty$, while this is not the case in \cite{KhistiMIMO}. More
precisely, from (\ref{wrtap8}) for the case of $\mathrm{rank}(\bG)=n_e<n_t$, $n_t-n_e$
diagonal elements of $\bD$ are equal to one, as mentioned in Remark
\ref{remap1}. The $\sigma_i$ corresponding to $d_i=1$ tends to $\infty$.

\begin{lem}\label{lem6}
In the high SNR scenario $(P_t\rightarrow \infty)$ and for the case of
$\mathrm{rank}(\bG)=n_t$, the asymptotic form of the exact secrecy
capacity (\ref{wirtp56}) is simply given by
\begin{align} \label{wirtp58}
\mathcal{C}_{sec}&= \log|\bI+(\bD-\bI)^{-1}| =\sum_{i=1}^{n_t}\log(\sigma^2_i)
\; .
\end{align}
\end{lem}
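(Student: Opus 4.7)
The plan is to start from the closed-form secrecy capacity expression derived in Remark~\ref{rem4}, namely
\[
\mathcal{C}_{sec}(P_t)=\log\bigl|\bI-(\mathbf{\Lambda}_1+\bI)^{-1}\bigr|+\log\bigl|\bI+(\bD-\bI)^{-1}\bigr|,
\]
and argue that as $P_t\to\infty$ the first term vanishes while the second, being independent of $P_t$, survives and coincides (by Lemma~\ref{lem5}) with $\sum_{i=1}^{n_t}\log(\sigma_i^2)$. The assumption $\mathrm{rank}(\bG)=n_t$ enters by ensuring that $\bar{\bS}^{\frac{1}{2}}\bG^H\bG\bar{\bS}^{\frac{1}{2}}\succ\b0$, so that $\bD\succ\bI$ and hence $(\bD-\bI)^{-1}$ and $\bar{\bSig}$ in~(\ref{wirtp47}) are well defined and positive definite.

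The main technical step is to establish that the Lagrange multiplier $\mu$ tends to $0$ as $P_t\to\infty$. This is immediate from the monotonicity observation made just after Theorem~\ref{thm2}: $\mathrm{Tr}(\bQ^*)$ is monotonically decreasing in $\mu$, with $\lim_{\mu\to 0}\mathrm{Tr}(\bQ^*)=\infty$, so the power constraint $\mathrm{Tr}(\bQ^*)=P_t$ forces $\mu\downarrow 0$ in the high-SNR limit. From (\ref{wirtp53}), together with $\omega_i>0$ for all $i$ (which is exactly where $\mathrm{rank}(\bG)=n_t$ is used, through the positive-definiteness of $\bar{\bSig}$), it follows that $\lambda_{i1}\to\infty$ for every $i$. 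Consequently $(\mathbf{\Lambda}_1+\bI)^{-1}\to\b0$, and the first log-determinant in~(\ref{wirtp56}) vanishes. Note also that the full-rank validity condition $\bU\mathbf{\Lambda}_1\bU^H\succ\bD-\bI$ of Theorem~\ref{thm2} is automatically satisfied for $P_t$ large enough, so applying~(\ref{wirtp56}) in this regime is justified.

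Finally, the second term $\log\bigl|\bI+(\bD-\bI)^{-1}\bigr|$ depends only on $\bH$ and $\bG$, and Lemma~\ref{lem5} already identifies it with $\sum_{i=1}^{n_t}\log(\sigma_i^2)$, where $\sigma_i$ are the generalized singular values of $(\bH,\bG)$. Combining the two limits yields~(\ref{wirtp58}).

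I expect the only subtle point to be the justification that $\mu\to 0$ forces \emph{all} $\lambda_{i1}$ to diverge; this relies critically on $\omega_i>0$, i.e.\ on $\bar{\bSig}\succ\b0$, which in turn requires $\mathrm{rank}(\bG)=n_t$. If $\bG$ were rank-deficient, some diagonal entries of $\bD$ would equal $1$, making $(\bD-\bI)^{-1}$ singular and producing infinite entries in the channel-dependent term, so the clean asymptotic equality in~(\ref{wirtp58}) would break down. All remaining manipulations are routine algebra on diagonal matrices.
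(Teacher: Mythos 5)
Your proposal follows essentially the same route as the paper's proof: $\mu\to 0$ as $P_t\to\infty$ by the monotonicity remark after Theorem~\ref{thm2}, hence $\lambda_{i1}\to\infty$ in (\ref{wirtp53}), so the first log-determinant in (\ref{wirtp56}) vanishes and only the channel-dependent term identified in Lemma~\ref{lem5} survives. One small correction on where the hypothesis $\mathrm{rank}(\bG)=n_t$ actually enters: it is not needed to make the $\omega_i$ \emph{positive} (positivity of $\bar{\bSig}$ already follows from $\bar{\bS}\succ\b0$, and even $\omega_i=0$ would only make $\lambda_{i1}$ diverge faster); it is needed to make the $\omega_i$ \emph{finite}. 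If $\bG$ were rank-deficient, some diagonal entries of $\bD$ would equal $1$, the factors $\left(\bI-\bD^{-1}\right)^{-\frac{1}{2}}$ in (\ref{wirtp47}) would blow up, $\bar{\bSig}$ would acquire infinite eigenvalues, and then $4/(\mu\omega_i)$ need not diverge as $\mu\to 0$, so the corresponding $\lambda_{i1}$ need not tend to infinity. This is precisely the point the paper makes by noting that $\bar{\bSig}$ has finite-valued eigenvalues when $\mathrm{rank}(\bG)=n_t$; your closing remark about rank-deficient $\bG$ already contains the right mechanism, it just needs to be attached to the first term rather than (only) to the second.
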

\begin{proof}
For $P_t\rightarrow\infty$, the Lagrange parameter satisfies $\mu\rightarrow
0$, as mentioned after Theorem \ref{thm2}. Moreover, for the case of
$\mathrm{rank}(\bG)=n_t$, the matrix $\bar{\bSig}$ given by
(\ref{wirtp47}) will have finite-valued eigenvalues. Thus as $\mu\rightarrow
0$, the elements of the diagonal matrix $\mathbf{\Lambda}_1$,
given by (\ref{wirtp53}), go to $\infty$
($\lambda_{i1}\rightarrow\infty$). Consequently, the first $\log$ term
in (\ref{wirtp56}) disappears as $P_t\rightarrow\infty$.
\end{proof}

It is also interesting to consider the optimal solution in
(\ref{wirtp54}) for the case that the eavesdropper's channel is very
weak, e.g. $\bG=\b0$. For this specific case, the wiretap channel 
simplifies to a point-to-point MIMO Gaussian link, where the optimal
input covariance matrix under the average power constraint is known to
be $\bPhi_{H}\; (\frac{1}{\mu}\bI-\mathbf{\Lambda}^{-1}_H)^+
\;\bPhi_{H}^H$, and is found via the standard water-filling solution, where
unitary $\bPhi_{H}$ and diagonal $\mathbf{\Lambda}_H$ are obtained
from the eigenvalue decomposition $\bH^H\bH=\bPhi_{H}
\mathbf{\Lambda}_H\bPhi^H_{H}$.
\begin{lem}\label{lem7}
For the case of $\bG=\b0$, the optimal solution in~(\ref{wirtp54}) 
simplifies to the conventional water-filling solution, where
\begin{align} \label{wirtp59}
\bQ^*= \bPhi_{H}\; (\mu^{-1}\bI-\mathbf{\Lambda}^{-1}_H)^+ \;\bPhi_{H}^H \; .
\end{align}
\end{lem}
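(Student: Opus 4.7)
The strategy is to bypass the apparent singularity of formula~(\ref{wirtp54}) at $\bG=\b0$ by appealing directly to the KKT identity~(\ref{wirtp33}) from which (\ref{wirtp54}) was derived. Because formula~(\ref{wirtp54}) was characterized as the unique full-rank solution of (\ref{wirtp33}) under the hypothesis $\bH^H\bH\succ\bG^H\bG$, and because at $\bG=\b0$ this hypothesis reduces to the standard non-degeneracy condition $\bH^H\bH\succ\b0$, it suffices to plug $\bG=\b0$ directly into (\ref{wirtp33}) and solve for $\bQ^*$; by the uniqueness guaranteed in Remark~\ref{rem3} and the discussion surrounding Theorem~\ref{thm2}, the resulting matrix must coincide with the $\bG\to\b0$ limit of~(\ref{wirtp54}).

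First I would substitute $\bG=\b0$ into (\ref{wirtp33}): the factor $(\bQ^*\bG^H\bG+\bI)$ collapses to $\bI$ and the equation reduces to $\bH^H\bH = \mu(\bH^H\bH\bQ^*+\bI)$, which rearranges immediately to $\bQ^* = \mu^{-1}\bI - (\bH^H\bH)^{-1}$. Next I would diagonalize $\bH^H\bH = \bPhi_H\mathbf{\Lambda}_H\bPhi_H^H$ to obtain $\bQ^* = \bPhi_H(\mu^{-1}\bI-\mathbf{\Lambda}_H^{-1})\bPhi_H^H$. Finally I would observe that the full-rank condition~(\ref{wirtp55}) specialized to $\bG=\b0$ forces $\mu^{-1}>\lambda_{H,i}^{-1}$ for every eigenvalue $\lambda_{H,i}$ of $\bH^H\bH$, so the positive-part operation $(\cdot)^+$ in~(\ref{wirtp59}) is inactive and the two expressions agree exactly; the Lagrange multiplier $\mu$ is then pinned down by $\mathrm{Tr}(\bQ^*) = P_t$ exactly as in the standard water-filling solution.

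The main obstacle, which is precisely what motivates going through KKT rather than substituting into~(\ref{wirtp54}) directly, is that at $\bG=\b0$ we have $\bD=\bI$, making the factor $(\bI-\bD^{-1})^{-1/2}$ in~(\ref{wirtp54}) singular. If one insists on establishing the lemma by limit-taking inside~(\ref{wirtp54}), one would set $\bG_\varepsilon = \varepsilon\bG_0$ for an arbitrary fixed $\bG_0$, expand $\bar{\bS}_\varepsilon$, $\bD_\varepsilon$, $\bPhi_{\bar{\bs},\varepsilon}$, $\bar{\bSig}_\varepsilon$, $\bU_\varepsilon$ and $\mathbf{\Lambda}_{1,\varepsilon}$ to leading order in $\varepsilon$, and verify that the $\varepsilon^{-2}$ divergence of $(\bI-\bD^{-1})^{-1}$ (since $\bI-\bD^{-1}=O(\varepsilon^2)$) is exactly cancelled by the $\varepsilon^{2}$ smallness of the bracket $\bU\mathbf{\Lambda}_1\bU^H+\bI-\bD$ (since $\mathbf{\Lambda}_1\sim\varepsilon^2/\mu$ by~(\ref{wirtp53}) and $\bI-\bD=O(\varepsilon^2)$). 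The resulting finite limit is independent of the auxiliary $\bG_0$ and agrees with the closed-form expression produced by the KKT route, completing the proof.
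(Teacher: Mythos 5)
Your proposal is correct, but your primary argument follows a genuinely different route from the paper's. The paper proves Lemma~\ref{lem7} by working inside the formula itself: it first rewrites~(\ref{wirtp54}) in the form~(\ref{wirtp60}), i.e.\ as a single positive term minus $(\bH^H\bH)^{-1}$, and then takes the limit $\bG\rightarrow\b0$ term by term, using $\bD\rightarrow\bI$ and $\bPhi_{\bar{\bs}}\rightarrow\bPhi_H$ from~(\ref{wrtap7})--(\ref{wrtap8}), $\bU\rightarrow\bI$, and the explicit form~(\ref{wirtp53}) of $\mathbf{\Lambda}_1$ to show that the divergence of $(\bD-\bI)^{-1}$ is cancelled, arriving at $\mu^{-1}\bI-(\bH^H\bH)^{-1}$ in~(\ref{wirtp61}). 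Your main argument instead sets $\bG=\b0$ in the KKT identity~(\ref{wirtp33}), where everything stays finite, solves the resulting linear equation to get $\bQ^*=\mu^{-1}\bI-(\bH^H\bH)^{-1}$ immediately, and then identifies this with the degenerate limit of~(\ref{wirtp54}) via convexity (Remark~\ref{rem3}) and the full-rank condition~(\ref{wirtp55}), which correctly renders the $(\cdot)^+$ in~(\ref{wirtp59}) inactive. This is cleaner and less computational; its only cost is that the lemma as stated is a claim about formula~(\ref{wirtp54}), so you still owe the bridging step that the $\bG\rightarrow\b0$ limit of~(\ref{wirtp54}) exists and must coincide with the KKT solution --- which you do supply, and your order-of-magnitude analysis ($\bD-\bI=O(\varepsilon^2)$, $\bar{\bSig}=O(\varepsilon^{-2})$, hence $\mathbf{\Lambda}_1=O(\varepsilon^2)$ cancelling the $O(\varepsilon^{-2})$ outer factors) is exactly the mechanism the paper's computation exploits implicitly. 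In short, both proofs are valid; the paper verifies the cancellation explicitly, while you explain structurally why it must occur.
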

\begin{proof}
Using (\ref{wrtap7}) and via simple calculations, we note that for any $\bG$,  Eq. (\ref{wirtp54}) can be rewritten as 
\begin{align} \label{wirtp60}
\bQ^*=\bar{\bS}^{\frac{1}{2}}\bPhi_{\bar{\bs}}\left(\bI-\bD^{-1}\right)^{-\frac{1}{2}}\bD^{-1}\;\bU \mathbf{\Lambda}_1\bU^H\; \bD^{-1}\left(\bI-\bD^{-1}\right)^{-\frac{1}{2}} \bPhi_{\bar{\bs}}^H  \bar{\bS}^{\frac{1}{2}} - (\bH^H\bH)^{-1} \; .
\end{align}
From (\ref{wrtap8}), when $\bG\rightarrow\b0$ then
$\bD\rightarrow\bI$. Next from (\ref{wrtap7}),
$\bPhi_{\bar{\bs}}\rightarrow\bPhi_{H}$. Using these facts in
(\ref{wirtp47}), and after some straightforward calculations, we have
$\bar{\bSig}\rightarrow (\bD-\bI)^{-1}\bD^{-1} \mathbf{\Lambda}^{-1}_H
= \bOmega$, and $\bU\rightarrow\bI$ in (\ref{wirtp50}). Using these in
(\ref{wirtp60}), we have
\begin{align}\label{wirtp61}
\bQ^*&\rightarrow \bPhi_{H} (\bD-\bI)^{-1}\bD^{-1} \mathbf{\Lambda}^{-1}_H \mathbf{\Lambda}_1 \bPhi^H_{H} - (\bH^H\bH)^{-1} \nonumber\\
&\rightarrow \frac{1}{2} \bPhi_{H} (\bD-\bI)^{-1} \mathbf{\Lambda}^{-1}_H \left(-\bI+(\bI+\frac{4}{\mu}(\bD-\bI)\, \mathbf{\Lambda}_H )^\frac{1}{2}\right) \bPhi^H_{H} - (\bH^H\bH)^{-1}\nonumber\\
&\rightarrow \frac{1}{\mu}\bI  - (\bH^H\bH)^{-1} = \bPhi_{H}\; (\mu^{-1}\bI-\mathbf{\Lambda}^{-1}_H)^+ \;\bPhi_{H}^H \; , 
\end{align}
where in obtaining (\ref{wirtp61}) we used the fact that
$\bD\rightarrow\bI$ when $\bG\rightarrow\b0$.
\end{proof}

\section{Numerical Results}

\begin{figure}[t]
\centering
\includegraphics[width=3.5in,height=3.5in]{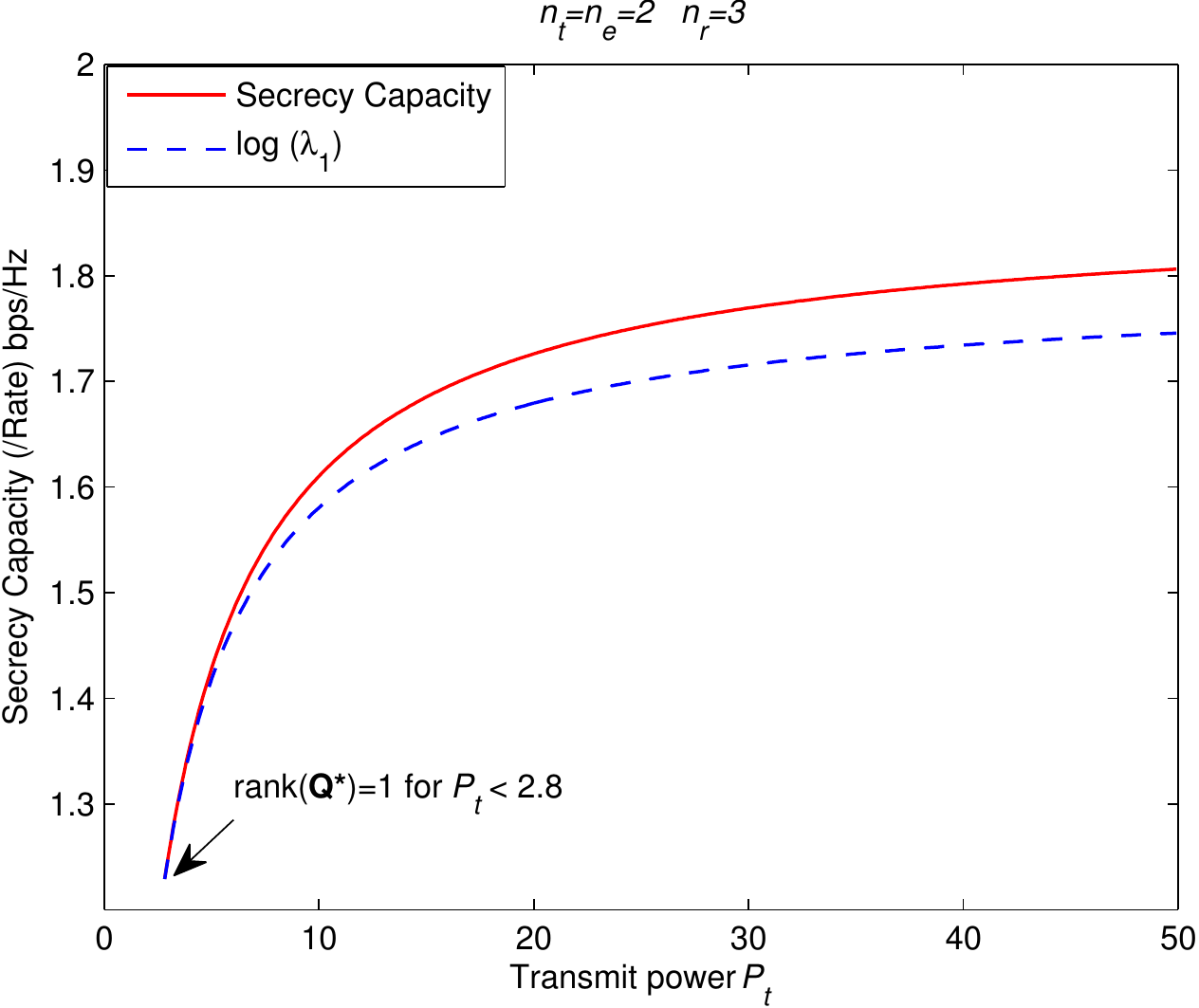}
\caption{Secrecy Capacity versus $P_t$ for $n_t=n_e=2$ and
$n_r=3$. Solid curve represents secrecy capacity and dotted curve
indicates the achievable secrecy rate using a rank-one input covariance
matrix. }
\label{wirtpexm1}
\end{figure}

In the first example, we consider a MIMO wiretap
channel with $n_t=n_e=2$, $n_t=3$ and channel matrices given by
\begin{align}
\bH=\left[\begin{array}{ccc}
   0.32 - 0.52i &  0.83 + 1.15i\\
   0.51 - 0.26i  & 0.06- 0.15i\\
  -0.11 + 0.81i  & 0.29 + 0.68i
\end{array}
\right], \label{wirtp62}\\ 
\bG=\left[\begin{array}{ccc}
   0.03 - 0.70i & -0.32 - 0.32i\\
   0.24 - 0.11i &  1.36 + 0.18i
\end{array}
\right] , \label{wirtp63}
\end{align} 
which satisfy
$\bH^H\bH-\bG^H\bG\succ\b0$. Fig.~\ref{wirtpexm1} shows the secrecy
capacity as a function of transmit power $P_t$.  For
comparison, the figure also depicts the achievable secrecy rate using
the input covariance matrix $\bQ=P_t\,\bu\bu^H$, which results to
$R_{sec}=\log \lambda_1$, as shown in
\cite{KhistiMIMO}-\cite{KhistiMISO}. Note that in this example, the
optimal $\bQ^*$ is not full-rank for $P_t<2.8$.

In Fig.~\ref{wirtpexm2} we consider another example of the case of
$\bH^H\bH-\bG^H\bG\succ\b0$, here with $n_t=n_e=3$, $n_t=4$ and 
channel matrices given by 
\begin{align}
\bH=\left[\begin{array}{ccc}
   0.89 + 0.54i & -0.06 + 0.60i &  0.48 - 1.11i\\
   0.46 &  -0.44 + 0.80i & -1.07 + 0.63i \\
   1.40 - 0.13i &  0.17 - 0.82i &  0.59 - 0.31i\\
   0.43 - 0.23i  & 0.03 + 1.35i  & 0.44 - 0.07i
\end{array}
\right], \label{wirtp64}\\ 
\bG=\left[\begin{array}{ccc}
   0.46 - 0.59i &  0.24 - 0.01i & -0.37 + 0.15i\\
   0.51 - 0.63i &  0.58 + 0.51i &  0.86 - 0.47i\\
   0.17 - 0.24i & -0.83 + 0.51i &  0.04 - 0.64i
\end{array}
\right] . \label{wirtp65}
\end{align} 
For this example, the optimal $\bQ^*$ is only full-rank
for $P_t > 0.5$.

\begin{figure}[t]
\centering
\includegraphics[width=3.5in,height=3.5in]{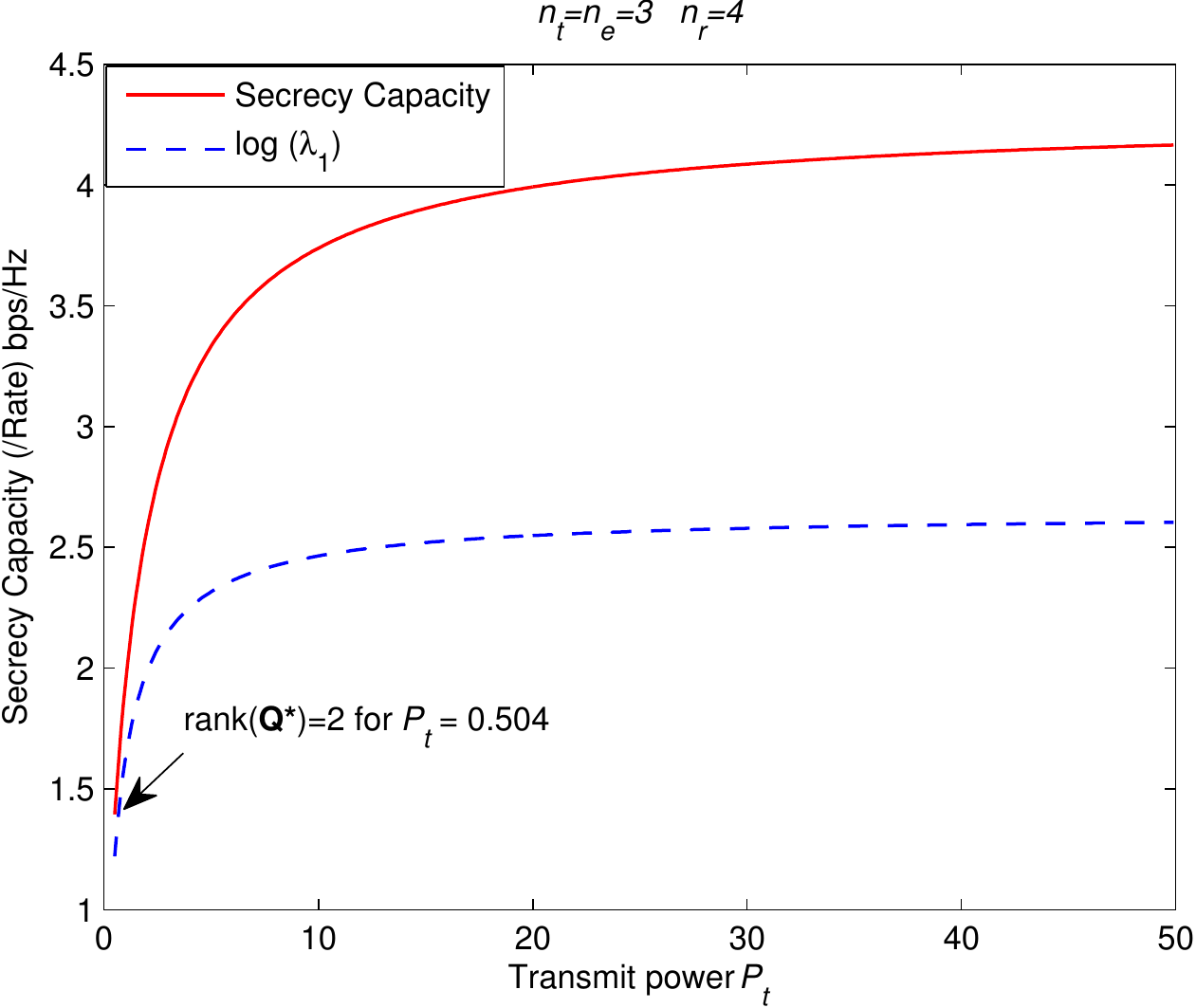}
\caption{Secrecy Capacity versus $P_t$ for $n_t=n_e=3$ and $n_r=4$. Solid curve represents secrecy capacity and dotted curve indicates the achievable secrecy rate using a rank-one input covariance matrix. }
\label{wirtpexm2}
\end{figure}
\begin{figure}[h]
\centering
\includegraphics[width=3.5in,height=3.5in]{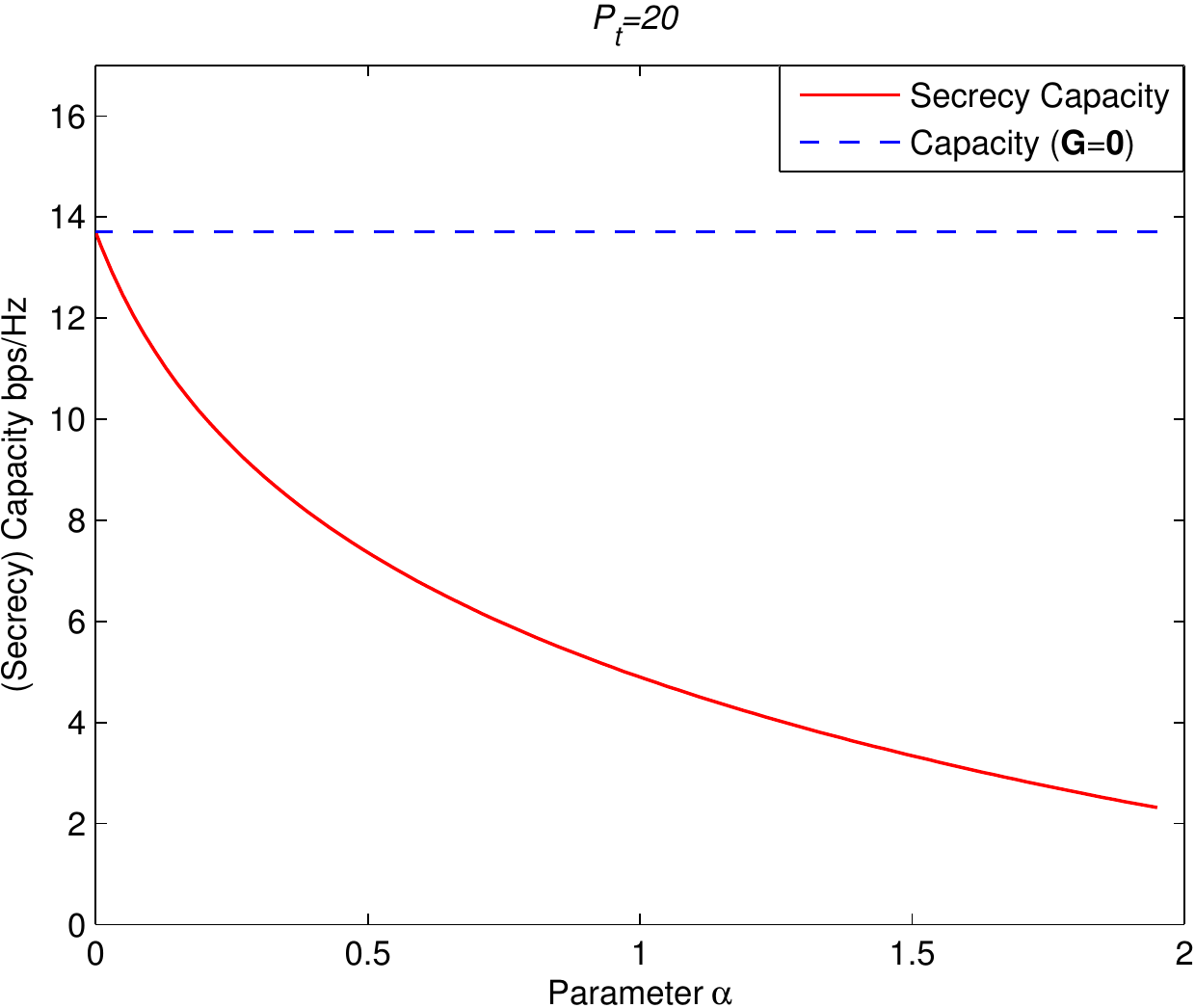}
\caption{Secrecy Capacity versus $\alpha$ for $P_t=20$. Solid curve represents secrecy capacity and dashed curve indicates the point to point capacity. }
\label{P2Pcap}
\end{figure}

Finally  in Fig.~\ref{P2Pcap}, we compare the standard point-to-point capacity without secrecy
constraints with the secrecy capacity given by (\ref{wirtp56}). In this example, $P_t=20$, direct channel $\bH$ is given by (\ref{wirtp64}) but the cross channel $\bG$ is assumed to satisfy  $\bG^H\bG=\alpha\bI$, where $\alpha$ changes from $0$ to $1.95$ (note that $\bH^H\bH-\bG^H\bG\succ\b0$ only for $\alpha\le1.95$). As predicted, the secrecy capacity achieved by the derived $\bQ^*$ in (\ref{wirtp54}) approaches the standard capacity as $\bG\rightarrow\b0$. It is interesting to note that even
for very small values of $\alpha$, the difference between the standard capacity and
secrecy capacity is considerable.

\section{Conclusion}
In this paper, we considered the rank property of the optimal input
covariance matrix under the average power constraint for a general
MIMO Gaussian wiretap channel, where each node has an arbitrary number of
antennas. We obtained necessary and sufficient constraints on
the MIMO wiretap channel parameters such that the optimal input
covariance matrix is full-rank, and we presented a method for
characterizing the resulting covariance matrix as well.

\appendices
\section{Proof of Lemma \ref{lem4}}
Define $\bar{\bS}=(\bH^H\bH-\bG^H\bG)^{-1}$ and apply the generalized
eigenvalue decomposition on the pencil
$(\bar{\bS}^{\frac{1}{2}}\bH^H\bH\bar{\bS}^{\frac{1}{2}}+\bI\;,\;\bar{\bS}^{\frac{1}{2}}\bG^H\bG\bar{\bS}^{\frac{1}{2}}+\bI)$ to obtain 
the invertible generalized eigenvector
matrix $\bar{\bC}$ and the diagonal generalized eigenvalue matrix
$\mathbf{\Lambda}_{\bar{\bs}}$ as
\begin{align}
\bar{\bC}^H\left[\bar{\bS}^{\frac{1}{2}}\bH^H\bH\bar{\bS}^{\frac{1}{2}}+\bI\right]\bar{\bC} &=\mathbf{\Lambda}_{\bar{\bs}} \label{wrtap1}\\
\bar{\bC}^H\left[\bar{\bS}^{\frac{1}{2}}\bG^H\bG\bar{\bS}^{\frac{1}{2}}+\bI\right]\bar{\bC} &=\bI\;.  \label{wrtap2}
\end{align}
By subtracting (\ref{wrtap2}) from (\ref{wrtap1}), we have
\begin{align} \label{wrtap3}
\bar{\bC}^H \,\bar{\bC} = \mathbf{\Lambda}_{\bar{\bs}}-\bI\;.
\end{align}
Note that from Lemma \ref{lem3}, we have $\mathbf{\Lambda}_{\bar{\bs}}- \bI\succ\b0$. Thus, $\bar{\bC}$ must be of the form \cite{Horn}
\begin{align} \label{wrtap4}
\bar{\bC} = \bPhi_{\bar{\bs}}\,(\mathbf{\Lambda}_{\bar{\bs}}-\bI)^{\frac{1}{2}} \;,
\end{align}
where $\bPhi_{\bar{\bs}}$ is an unknown unitary matrix. In the
following, as we continue the proof, $\bPhi_{\bar{\bs}}$ is also
characterized.

By replacing (\ref{wrtap4}) in (\ref{wrtap1}) and (\ref{wrtap2}), it is revealed that the unitary matrix $\bPhi_{\bar{\bs}}$ represents the common set of eigenvectors for the matrices $\bar{\bS}^{\frac{1}{2}}\bH^H\bH\bar{\bS}^{\frac{1}{2}}+\bI$ and $\bar{\bS}^{\frac{1}{2}}\bG^H\bG\bar{\bS}^{\frac{1}{2}}+\bI$, and thus both matrices commute.  In particular, 
\begin{align}
\bPhi_{\bar{\bs}}^H\left[\bar{\bS}^{\frac{1}{2}}\bH^H\bH\bar{\bS}^{\frac{1}{2}}+\bI\right]\bPhi_{\bar{\bs}} &=\mathbf{\Lambda}_{\bar{\bs}}(\mathbf{\Lambda}_{\bar{\bs}}-\bI)^{-1}=\bI+(\mathbf{\Lambda}_{\bar{\bs}}-\bI)^{-1}   \label{wrtap5}\\
\bPhi_{\bar{\bs}}^H\left[\bar{\bS}^{\frac{1}{2}}\bG^H\bG\bar{\bS}^{\frac{1}{2}}+\bI\right]\bPhi_{\bar{\bs}} &=(\mathbf{\Lambda}_{\bar{\bs}}-\bI)^{-1} \;. \label{wrtap6}
\end{align}
Defining $\bD=(\mathbf{\Lambda}_{\bar{\bs}}-\bI)^{-1}$, from (\ref{wrtap5})-(\ref{wrtap6}) and via straightforward computation, we have
\begin{align}
\bH^H\bH &=\bar{\bS}^{-\frac{1}{2}}\, \bPhi_{\bar{\bs}}\;\bD  
\;\bPhi_{\bar{\bs}}^H  \bar{\bS}^{-\frac{1}{2}}  \label{wrtap7}\\
\bG^H\bG &=\bar{\bS}^{-\frac{1}{2}}\, \bPhi_{\bar{\bs}} \; \left(\bD-\bI\right)\; \bPhi_{\bar{\bs}}^H   \bar{\bS}^{-\frac{1}{2}} \; , \label{wrtap8}
\end{align}
which proves (\ref{wirtp35}) and (\ref{wirtp36}). 
Substituting~(\ref{wrtap3}) in~(\ref{wrtap2}), we also have
\begin{align}
\bI&=\bar{\bC}^H\left[\bar{\bS}^{\frac{1}{2}}\bG^H\bG\bar{\bS}^{\frac{1}{2}}+\bI\right]\bar{\bC} \nonumber\\
&=  \bar{\bC}^H\bar{\bS}^{\frac{1}{2}}\bG^H\bG\bar{\bS}^{\frac{1}{2}}\bar{\bC}+ \mathbf{\Lambda}_{\bar{\bs}}-\bI  \; , \nonumber
\end{align}
or equivalently
$$ 2\bI-\mathbf{\Lambda}_{\bar{\bs}}= \bar{\bC}^H\bar{\bS}^{\frac{1}{2}}\bG^H\bG\bar{\bS}^{\frac{1}{2}}\bar{\bC}.$$

\begin{rem}\label{remap1}
Since $\bar{\bC}^H\bar{\bS}^{\frac{1}{2}}\bG^H\bG\bar{\bS}^{\frac{1}{2}}\bar{\bC}\succeq\b0$, it results that $2\bI-\mathbf{\Lambda}_{\bar{\bs}}\succeq\b0$. Equivalently, by defining $\bD=(\mathbf{\Lambda}_{\bar{\bs}}-\bI)^{-1}\succ\b0$, we have \begin{align}
\begin{split}  \label{wrtap9}
\bI-\bD^{-1}\succeq \b0 \\
\bD-\bI\succeq \b0.
\end{split}
\end{align}
\end{rem}
Note that from (\ref{wrtap8}), if $\bG^H\bG\succ\b0$, then
$\bD-\bI\succ \b0$ and vice versa. As we will observe in Theorem
\ref{thm2}, to have a full-rank optimal input covariance matrix
$\bQ^*$, having a full-rank $\bG^H\bG$ is not required. While we
assume throughout the paper and without loss of generality that the
diagonal matrix $\bD-\bI$ is invertible, for the case of rank
deficient $\bG^H\bG$ one can follow the calculations 
in this paper assuming $\epsilon>0$ for zero-diagonal elements of
$\bD-\bI$ and letting $\epsilon\downarrow 0$ at the end (see Lemma
\ref{lem7}).

\section{Proof of Theorem \ref{thm3}}
We want to obtain $\bQ_{d}^*$, the optimal input covariance matrix
that attains the secrecy capacity for the case of
$\bH^H\bH-\bG^H\bG\succeq\b0$. We note that
$\mathrm{rank}(\bH^H\bH-\bG^H\bG)=m<n_t$. Hence, from Theorems
\ref{thm1} and \ref{thm2}, $\bQ_{d}^*$ is rank-\emph{deficient}.

The right hand side of (\ref{wirtp25}) can be rewritten as
\begin{align}
R(\bQ)&=\log|\bI+\bH^H\bH\bQ|-\log|\bI+\bG^H\bG\bQ|\nonumber\\
&= \log|(\bI+\bH^H\bH\bQ) \, (\bI+\bG^H\bG\bQ)^{-1}| \nonumber\\
&= \log|\bI+(\bH^H\bH-\bG^H\bG) \, \bQ \, (\bI+\bG^H\bG\bQ)^{-1}| \; , 
\label{wrtap10}
\end{align}
where  Eq. (\ref{wrtap10}) is obtained from the matrix inversion lemma \cite{Horn} $(\bI+\bA)^{-1}=\bI-\bA(\bI+\bA)^{-1}$. 
Let the eigenvalue decomposition of $\bH^H\bH-\bG^H\bG$ to be denoted as
\begin{align}\label{wrtap11}
\bH^H\bH-\bG^H\bG = \bPsi \left[\begin{array}{ccc}
\mathbf{\Lambda}_m & \b0\\
\b0 & \b0 \end{array}\right] \bPsi^H \; , 
\end{align}
where $\mathbf{\Lambda}_m \succeq \b0$ is a diagonal matrix of size $m\times m$. Using  (\ref{wrtap11}) in   (\ref{wrtap10}), we have
\begin{align}
R(\bQ)& = \log\left|\bI+\bPsi \left[\begin{array}{ccc}
\mathbf{\Lambda}_m & \b0\\
\b0 & \b0 \end{array}\right] \bPsi^H \, \bQ \, (\bI+\bG^H\bG\bQ)^{-1}\right| \nonumber\\
& = \log\left|\bI+ \left[\begin{array}{ccc}
\mathbf{\Lambda}_m & \b0\\
\b0 & \b0 \end{array}\right] \bPsi^H \bQ\bPsi (\bI+\bPsi^H\bG^H\bG\bPsi \,\bPsi^H\bQ\bPsi)^{-1}\right| \; ,  \label{wrtap12}
\end{align}
where in obtaining (\ref{wrtap12}) we have used the facts that $\bPsi^H\bPsi=\bPsi\bPsi^H=\bI$ and $|\bI+\bA\bB|=|\bI+\bB\bA|$. 

Define $\bar{\bQ}=\bPsi^H\bQ\bPsi$ and $\bar{\bG}=\bG\bPsi$, so that
the optimization problem in (\ref{wirtp25}) can be rewritten as
$$\mathcal{C}_{sec}(P_t)= \max_{\bar{\bQ}\succeq\b0,\; \mathrm{Tr}(\bar{\bQ})=P_t} \; R(\bar{\bQ}) \; , $$
where 
\begin{align}\label{wrtap13}
R(\bar{\bQ}) = \log\left|\bI+ \left[\begin{array}{ccc}
\mathbf{\Lambda}_m & \b0\\
\b0 & \b0 \end{array}\right] \bar{\bQ}\, (\bI+\bar{\bG}^H\bar{\bG}\,\bar{\bQ})^{-1}\right|.
\end{align}
From right-hand side of (\ref{wrtap13}), we see that 
the optimal $\bar{\bQ}$ is of the form
\begin{align}\label{wrtap14}
\bar{\bQ}= \left[\begin{array}{ccc}
\bar{\bQ}_m & \b0\\
\b0 & \b0 \end{array}\right] \; , 
\end{align}
where $\bar{\bQ}_m \succeq \b0$ is of size $m\times m$. 
Write $\bar{\bG}^H\bar{\bG}$ as 
\begin{align}\label{wrtap15}
\bar{\bG}^H\bar{\bG} = \left[\begin{array}{ccc}
\bJ_1 & \bJ_2\\
\bJ_2^H & \bJ_3 \end{array}\right] 
\end{align}
where $\bJ_1$, $\bJ_2$ and $\bJ_3$ are of dimensions $m\times m$,
$m\times (n_t-m)$ and $(n_t-m)\times (n_t-m)$, respectively. By
substituting (\ref{wrtap14}) and (\ref{wrtap15}) into (\ref{wrtap13}),
we obtain
\begin{align}\label{wrtap16}
R(\bar{\bQ}) &= \log\left|\bI+ \left[\begin{array}{ccc}
\mathbf{\Lambda}_m \bar{\bQ}_m & \b0\\
\b0 & \b0 \end{array}\right] \left[\begin{array}{ccc}
\bI+\bJ_1\bar{\bQ}_m & \b0\\
\bJ_2^H\bar{\bQ}_m & \bI \end{array}\right]^{-1}\right|\nonumber\\
&= \log\left|\bI+ \left[\begin{array}{ccc}
\mathbf{\Lambda}_m \bar{\bQ}_m & \b0\\
\b0 & \b0 \end{array}\right] \left[\begin{array}{ccc}
(\bI+\bJ_1\bar{\bQ}_m)^{-1} & \b0\\
-\bJ_2^H\bar{\bQ}_m(\bI+\bJ_1\bar{\bQ}_m)^{-1} & \bI \end{array}\right]\right|\nonumber\\
&= \log\left|\bI+ \left[\begin{array}{ccc}
\mathbf{\Lambda}_m \bar{\bQ}_m(\bI+\bJ_1\bar{\bQ}_m)^{-1} & \b0\\
\b0 & \b0 \end{array}\right]\right| \nonumber\\
&= \log\left|\left[\begin{array}{ccc}
\bI+\mathbf{\Lambda}_m \bar{\bQ}_m(\bI+\bJ_1\bar{\bQ}_m)^{-1} & \b0\\
\b0 & \bI \end{array}\right]\right| \nonumber\\
&=\log\left|\bI+\mathbf{\Lambda}_m \bar{\bQ}_m(\bI+\bJ_1\bar{\bQ}_m)^{-1}\right|\nonumber\\
&=\log\left|\bI+(\mathbf{\Lambda}_m+\bJ_1) \bar{\bQ}_m\right|-\log\left|\bI+\bJ_1\bar{\bQ}_m\right|=R(\bar{\bQ}_m).
\end{align}
Using (\ref{wrtap16}), the secrecy capacity is given by
\begin{align}\label{wrtap17}
\mathcal{C}_{sec}(P_t)= \max_{\bar{\bQ}_m\succeq\b0,\; \mathrm{Tr}(\bar{\bQ}_m)=P_t} \; R(\bar{\bQ}_m) \; . 
\end{align}

Problem~(\ref{wrtap17}) shows that the secrecy capacity of a wiretap channel with $\bH^H\bH-\bG^H\bG\succeq\b0$ is equal to the secrecy capacity of an equivalent wiretap channel with  
\begin{align}
\bH_{eq}^H\bH_{eq} &=\mathbf{\Lambda}_m+\bJ_1 \label{wrtap18}\\
\bG_{eq}^H\bG_{eq} &= \bJ_1 \; , \label{wrtap19}
\end{align}
where $\mathbf{\Lambda}_m$ and $\bJ_1$ are respectively given by
(\ref{wrtap11}) and (\ref{wrtap15}).  It should also be noted that for
the equivalent channel, $\bH_{eq}^H\bH_{eq}- \bG_{eq}^H\bG_{eq} =
\mathbf{\Lambda}_m\succ \b0$. Thus, the optimal $\bar{\bQ}_m^*$ can be
computed using Theorem \ref{thm2}, as long as the equivalent channel
satisfies the second condition in Theorem \ref{thm2}. Finally, by
substituting $\bar{\bQ}_m^*$ back into (\ref{wrtap14}) we obtain
\begin{align}\label{wrtap20}
\bQ_d^*=\bPsi \left[\begin{array}{ccc}
\bar{\bQ}^*_m & \b0\\
\b0 & \b0 \end{array}\right] \bPsi^H\, , 
\end{align} 
which completes the proof.

\end{document}